\newtheoremstyle{ieeeit-head}
  {0pt}                 
  {0pt}                 
  {\normalfont}             
  {\parindent}              
  {\itshape}                
  {:}                       
  {0.5em}                   
  {\thmname{#1}\thmnumber{~#2}\thmnote{~(\itshape #3)}}
\theoremstyle{ieeeit-head}
\newtheorem{definition}{Definition}
\newtheorem{assumption}{Assumption}
\newtheorem{lemma}{Lemma}
\newtheorem{theorem}{Theorem}
\newtheorem{remark}{Remark}
\begin{document}

\title{Prescribed-Time Convergent Distributed Multiobjective Optimization With Dynamic Event-Triggered Communication}

\author{Tengyang Gong, \IEEEmembership{Graduate Student Member, IEEE}, Zhongguo Li, \IEEEmembership{Member, IEEE}, Yiqiao Xu, \IEEEmembership{Member, IEEE}, and Zhengtao Ding, \IEEEmembership{Senior Member, IEEE}
\thanks{Tengyang Gong, Zhongguo Li, Yiqiao Xu, and Zhengtao Ding, are with the Department of Electrical and Electronic Engineering, The University of Manchester, Manchester, M13 9PL, U.K. (e-mail:  tengyang.gong; zhongguo.li; yiqiao.xu; zhengtao.ding@manchester.ac.uk).}
\thanks{\noindent
\copyright~2025 IEEE. Personal use of this material is permitted. 
Permission from IEEE must be obtained for all other uses, in any current or future media, 
including reprinting/republishing this material for advertising or promotional purposes, 
creating new collective works, for resale or redistribution to servers or lists, 
or reuse of any copyrighted component of this work in other works.  
The final version of record is available at:  
\href{https://ieeexplore.ieee.org/document/11220925}{10.1109/TSMC.2025.3623465}.
}
}

\markboth{IEEE TRANSACTIONS ON SYSTEMS, MAN, AND CYBERNETICS: SYSTEMS}%
{Gong \MakeLowercase{\textit{et al.}}: A Sample Article Using IEEEtran.cls for IEEE Journals}


\maketitle

\thispagestyle{arxiv}

\begin{abstract}
This paper addresses distributed constrained multiobjective resource allocation problems (DCMRAPs) in multi-agent networks, where agents face multiple conflicting local objectives under local and global constraints. By reformulating DCMRAPs as single-objective weighted $L_p$ problems, the proposed approach enables distributed solutions without relying on predefined weighting coefficients or centralized decision-making. Leveraging prescribed-time control and dynamic event-triggered mechanisms (ETMs), a novel distributed algorithm is proposed within a prescribed time through sampled communication. Using generalized time-based generators (TBGs), the algorithm provides more flexibility in optimizing solution accuracy and trajectory smoothness without the constraints of initial conditions. Novel dynamic ETMs, integrated with generalized TBGs, improve communication efficiency by adapting to local error metrics and network-based disagreements, while providing enhanced flexibility in balancing solution accuracy and communication frequency. The Zeno behavior is excluded. Validated by Lyapunov analysis and simulation experiments, our method demonstrates superior control performance and efficiency compared to existing methods, advancing distributed optimization across diverse applications.
\end{abstract}

\begin{IEEEkeywords}
Distributed optimization, multiobjective optimization, multiagent system (MAS), resource allocation, prescribed-time optimization, time-based generator (TBG), dynamic event-triggered.
\end{IEEEkeywords}

\section{Introduction}
\label{sec:introduction}
\IEEEPARstart{D}{istributed} optimization (DO) has garnered substantial attention over the past two decades due to its scalability, fault tolerance, privacy preservation, and communication efficiency. These features have made it applicable in diverse fields, including machine learning \cite{nedic2020distributedmachine}, robotic coordination control \cite{li2020distributedcoordination}, and energy management systems \cite{zhao2019distributed, Xu2024Power}. A prominent research direction within DO is the distributed resource allocation problem, wherein agents cooperatively allocate resources to optimize global objectives, each agent possessing local objective functions and, in some instances, local constraints \cite{yi2016initialization, chen2022initializationfree, yang2022ADMMnonconvexNonsmooth, zhang2020DualRA}. Continued exploration in DO has yielded numerous algorithms \cite{nedic2009distributedsubgradient, KIA2015Distributed, zhu2019continuoustimegenral, gao2022ETMGT}, typically exhibiting asymptotic or exponential convergence to the optimal solution.

Recent finite-time \cite{chen2017distributedfinitetime, shi2019finite, liu2022finite} and fixed-time \cite{ning2019fixeddistributed, li2021multiobjective, chen2022initializationfree} DO algorithms aim for guaranteed performance and robustness. However, their settling times often depend on initial conditions and control parameters, and they face limitations such as conservative settling time upper bounds and high initial control efforts. To set an a priori bound independent of initial conditions, predefined-time algorithms have been developed \cite{Lin2020predefined, Wang2024predefined,Su2025predefined}. By contrast, prescribed-time control offers smoother control actions and user-specifiable settling times through time-based generators (TBGs). Despite practical studies to mitigate singularities induced by high-gain control \cite{NING2019Practicalfixed, GUO2022distributeddynamic, shi2022predefinedtime, yang2025preETM}, conventional TBGs have limited structural flexibility and introduce a tolerance parameter whose minimization is physically limited by processors, hindering ideal accuracy. A time–space deformation approach is presented in \cite{zhou2025singularity} which avoids high-gain singularities without relying on the tolerance parameter. Moreover, \cite{liu2023multiobjective} proposed a generalized TBG structure that accommodates more diverse TBG forms, enabling smaller errors and encompassing classical TBGs as a special case.

Communication among agents consumes more energy than computation. Different from traditional continuous-time or periodic communication strategies \cite{KIA2015Distributed, zhou2022sampledData}, event-triggered mechanisms (ETMs) efficiently reduce communication frequency by capturing only significant data changes. Event-triggered DO algorithms have been proposed using static ETMs \cite{KIA2015Distributed,gao2022ETMGT, zhao2019distributed, wu2020event, chai2024event} or dynamic ETMs \cite{duDOdynamic, XU2022DETM, Liu2023DONonlinearETM}. Subsequent research has integrated ETMs with finite/fixed-time control \cite{song2022FixedTimeETM, yang2024FixedTimeDynamicETM, JI2024etm}. Although some studies have recently explored dynamic ETMs for prescribed-time DO algorithms \cite{GUO2022distributeddynamic, shi2022predefinedtime, yang2025preETM, yang2025preETMEL}, improving ETM efficiency, particularly through refined triggering thresholds, remains an active research area. This endeavor is particularly challenging as prescribed-time convergence imposes stringent stability conditions, while dynamic ETMs concurrently demand rigorous stability proofs, effective triggering thresholds, and Zeno behavior avoidance. Harmonizing prescribed-time control with dynamic ETMs is therefore mathematically complex, necessitating meticulous parameter design, adaptive feedback laws, and internal variable adjustment rules. 

The majority of the aforementioned literature concentrates on single-objective problems. However, real-world optimization problems often demonstrate multiple conflicting objectives such as economic efficiency, environmental sustainability, and technical robustness \cite{Liu2021LiNing, zhao2019distributed, BARAKAT2020MOPReliability, FONSECA2021Sustainability}, typically addressed through scalarization methods or evolutionary algorithms. Evolutionary algorithms, including ant colony optimization \cite{huang2021AntColony}, genetic algorithms \cite{DENG2022genetic}, and particle swarm optimization \cite{Tian2020Swarm}, are inherently stochastic because they rely on random choices in several steps. Conversely, scalarization methods, like \cite{ YU2022fixedtimeconsensusDO, Liu2021LiNing, liu2023multiobjective}, transform multiple objectives into a single composite function through weighted aggregation. While these approaches enable the application of traditional single-objective DO techniques, they necessitate a sequential two-phase process to allocate weights. Consequently, this often requires the presence of a central decision-maker, which contradicts fully decentralized settings.

To eliminate central authority in multiobjective resource allocation problems (MRAPs), \cite{li2021multiobjective} introduced a fully distributed scheme using an online-constructed weighted $ L_p $ preference index, whereby each agent independently computes its weights and ideal points using only locally available information within a fixed-time paradigm. The version with global prescribed-time convergence is further explored in \cite{zhang_predefined-time_2023}. However, existing approaches do not comprehensively address practical scenarios involving both global and local constraints, alongside the crucial need for communication efficiency and flexibility to diverse applications. 

Motivated by the above discussions, this paper proposes a distributed algorithm for MRAPs with local and global constraints, achieving prescribed-time convergence and sampled communication for improved efficiency and flexibility.

The main contributions are summarized as follows:
\begin{enumerate}
    \item  By dynamically constructing a weighted $L_p$ preference problem in a distributed manner, this study ensures a globally optimal compromise solution to the DCMRAP without any central authority, unlike \cite{YU2022fixedtimeconsensusDO, Liu2021LiNing, liu2023multiobjective}. Moreover, while existing schemes \cite{li2021multiobjective, zhang_predefined-time_2023} neglect local constraints, this study addresses both local and global constraints using the differentiated projection operator and new formulations of weighting coefficients and ideal points. This aligns with the physical limitations, such as output bounds, and enhances real‐world applicability.

    \item  A novel projection-based distributed algorithm with dynamic ETMs is proposed to guarantee prescribed-time convergence. Unlike prior schemes \cite{NING2019Practicalfixed, yang2025preETM, GUO2022distributeddynamic, zhang_predefined-time_2023}, our algorithm offers enhanced flexibility through generalized TBGs, which support diverse functional formulas, enabling optimization of solution accuracy and trajectory smoothness across varied operating conditions. When paired with the proposed dynamic ETMs, it provides a tunable trade‐off between convergence precision and communication frequency, further improving adaptability to practical system requirements.

    \item  Novel dynamic ETMs are deigned to further reduce communication after the prescribed convergence time, in contrast to static ETMs \cite{zhao2019distributed, wu2020event, chai2024event}. Compared to dynamic ETMs \cite{GUO2022distributeddynamic, shi2022predefinedtime, yang2025preETM}, our design combines local error metrics with network-based disagreement to elevate triggering thresholds, while leveraging generalized TBGs to adjust dynamic internal variables. This integration guarantees prescribed‐time convergence, excludes Zeno behavior, and enhances overall performance. Consequently, the proposed ETMs are highly suitable for resource-constrained applications, such as wireless sensor networks and distributed robotics, where communication efficiency is crucial.
     
    \item Rigorous Lyapunov analysis and detailed microgrid simulations confirm prescribed-time convergence and improved communication efficiency, outperforming existing methods.

\end{enumerate}

The remainder of this paper is structured as follows: Section \ref{section2} presents fundamental preliminaries and formulates DCMRAPs. Section \ref{section3} details the distributed algorithm along with the convergence analysis. Section \ref{section4} demonstrates various experiments. Finally, Section \ref{section5} concludes this paper.

\section{PRELIMINARIES AND PROBLEM FORMULATION}\label{section2}
\subsection{Notations}
The sets of real numbers, non-negative real numbers, positive real numbers, $n$-dimensional real vectors and $m\times n$-dimensional matrices are denoted as $\mathbb{R}$, $\mathbb{R}_{+}$, $\mathbb{R}_{++}$, $\mathbb{R}^{n}$, $\mathbb{R}^{m\times n}$, respectively. For a matrix $A=\left[a_{ij}\right]\in \mathbb{R}^{m\times n}$, $a_{ij}$ represents the element in the $i$th row and $j$th column. $\mathbf{0}_{n}$, $\mathbf{1}_{n}$ and $I_{n}$ denote $n$-dimensional vectors with all entries equal to zero, all entries equal to one, and an ${n\times n}$ identity matrix, respectively. Let $\nabla f(x)$ be the gradient of a function $f(x):\mathbb{R}^n\to \mathbb{R}$ at the point $x\in \mathbb{R}^{n}$. $\mathbf{dom}\ f$ denotes the feasible domain of $f$. $\left \| \cdot \right \|$ denotes the Euclidean norm. $\times$ denotes the Cartesian product. Let $\lambda _{i}(A)$ be the $i$th smallest eigenvalue of the matrix $A$ with $\lambda _{1}(A) \leq \lambda _{2}(A) \leq \cdots \leq \lambda _{n}(A)$. Let $\mathrm{int}(\Omega)$ and $\partial \Omega$ denote the sets of interiors and boundaries of set $\Omega$, respectively.

\subsection{Graph Theory}
Let $\mathcal{G}=(\mathcal{V},\mathcal{E})$ denote a graph with the set of agents $\mathcal{V}=\left \{ 1,\dots,N \right \}$ and the set of edges $\mathcal{E}\subseteq \mathcal{V}\times \mathcal{V}$. A graph is undirected if and only if $(i,j)\in\mathcal{E}$ and $(j,i)\in\mathcal{E}$ hold simultaneously. Let $A=\left[a_{ij}\right]\in\mathbb{R}^{N\times N}$ be adjacency matrix of $\mathcal{G}$, where $a_{ij}>0$ if $\left(j,i\right)\in\mathcal{E}$ and $a_{ij}=0$ otherwise. Define the Laplacian matrix $L=\left[l_{ij}\right]\in\mathbb{R}^{N\times N}$ of graph $\mathcal{G}$ as $l_{ii} =\sum_{j=1,j\ne i}^{N}a_{ij}$ and $l_{ij} =-a_{ij},i\ne j$. $0$ is a simple eigenvalue of $L$ with the associated eigenvector $\mathbf{1}_N$ and all other eigenvalues are positive.
\begin{lemma}[Orthogonal Transformation \cite{yi2018doubleorder}] \label{orthogonaltransformation}
    There exists an orthogonal matrix $Q=\left [q_1 \ Q_2 \right ]\in \mathbb{R}^{N\times N}$ with $q_1 = \frac{1}{\sqrt{N}}\mathbf{1}_N$ such that
    \begin{equation}
         L = \left [q_1 \ Q_2 \right ]\begin{bmatrix}
                                    0& \\ 
                                     &\Lambda 
                                    \end{bmatrix}\begin{bmatrix}
                                    q_1^\top \\ 
                                     Q_2^\top 
                                    \end{bmatrix},
    \end{equation}
    where $\Lambda = \mathrm{diag}(\lambda_2(L),\dots,\lambda_N(L))$.
\end{lemma}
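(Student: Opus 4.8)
The plan is to invoke the spectral theorem for real symmetric matrices, exploiting the specific spectral structure of $L$ recorded immediately before the lemma. First I would observe that since $\mathcal{G}$ is undirected, the adjacency weights satisfy $a_{ij}=a_{ji}$, so the Laplacian $L$ is real and symmetric. Consequently $L$ is orthogonally diagonalizable: it admits a complete set of real orthonormal eigenvectors with real eigenvalues.

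Next I would use the stated facts that $0$ is a simple eigenvalue of $L$ with associated eigenvector $\mathbf{1}_N$ and that every remaining eigenvalue is positive. Normalizing $\mathbf{1}_N$, whose Euclidean norm is $\sqrt{N}$, yields the unit eigenvector $q_1 = \frac{1}{\sqrt{N}}\mathbf{1}_N$ for the eigenvalue $0$. I would then collect orthonormal eigenvectors corresponding to the positive eigenvalues $\lambda_2(L)\le\cdots\le\lambda_N(L)$ into the matrix $Q_2\in\mathbb{R}^{N\times(N-1)}$ and define $\Lambda=\mathrm{diag}(\lambda_2(L),\ldots,\lambda_N(L))$, which is positive definite.

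The central verification is that $Q=[q_1\ Q_2]$ is orthogonal: the columns of $Q_2$ are orthonormal by construction, and $q_1$ is orthogonal to each column of $Q_2$ because eigenvectors of a symmetric matrix belonging to distinct eigenvalues are mutually orthogonal. Assembling these eigenpairs gives the eigenvalue relation $LQ=Q\begin{bmatrix} 0 & \\ & \Lambda \end{bmatrix}$; since $Q$ is orthogonal we have $Q^{T}=Q^{-1}$, and right-multiplying by $Q^{T}$ produces exactly the claimed factorization $L=[q_1\ Q_2]\begin{bmatrix} 0 & \\ & \Lambda \end{bmatrix}\begin{bmatrix} q_1^{T} \\ Q_2^{T}\end{bmatrix}$.

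This argument is essentially a direct application of the spectral theorem, so I do not anticipate a genuine obstacle. The only point requiring care is ensuring that the zero eigenvalue occupies the isolated first diagonal block with precisely the prescribed eigenvector $q_1$; this is guaranteed by the simplicity of the zero eigenvalue, which in turn follows from connectivity of $\mathcal{G}$ as already noted in the preliminaries.
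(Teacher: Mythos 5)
Your proof is correct: the paper states this lemma by citation to \cite{yi2018doubleorder} without giving a proof, and your argument via the spectral theorem for real symmetric matrices — using symmetry of $L$ from undirectedness, the simple zero eigenvalue with normalized eigenvector $q_1=\frac{1}{\sqrt{N}}\mathbf{1}_N$ guaranteed by connectivity, and orthonormality of the remaining eigenvectors to assemble $LQ=Q\,\mathrm{diag}(0,\Lambda)$ — is exactly the standard derivation underlying the cited result. No gaps; in particular you correctly handle possible repeated positive eigenvalues by taking an orthonormal basis within each eigenspace rather than relying solely on distinctness.
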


\subsection{Convexity and Projection Operator}
\begin{definition}[Strong Convexity]\label{strongconvex}
    A function $f:\mathbb{R}^n\to\mathbb{R}$ is $m$-strongly convex if $\mathbf{dom}\ f$ is a convex set and if $\forall x_1,x_2\in\mathbf{dom}\ f$, $\exists \ m>0$ such that $
        f\left(x_2\right)\geq f\left(x_1\right)+{\nabla f\left(x_1\right)}^\top\left(x_2-x_1\right)+\frac{m}{2}\left \|x_2-x_1 \right \|^2.$  
\end{definition}
\begin{lemma}[\cite{ruszczynski2011nonlinear}]\label{lemmaStronglyConvex}
    If $f:\mathbb{R}^n\rightarrow\mathbb{R}$ is $m$-strongly convex, then $\forall x_1,x_2\in\mathbf{dom}\ f$, 
     \begin{equation}
        \left ( \nabla f(x_2)-\nabla f(x_1)\right )^\top \left ( x_2 -x_1 \right ) \geq m \left \| x_2 -x_1 \right \|^2. 
    \end{equation}
\end{lemma}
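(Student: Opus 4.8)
The plan is to derive this gradient-monotonicity inequality directly from the strong convexity inequality in Definition~\ref{strongconvex} by a standard symmetrization argument: apply the defining inequality to the ordered pair $(x_1,x_2)$, apply it again to the reversed pair $(x_2,x_1)$, and then sum the two.

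First I would write down the strong convexity bound with $x_1$ as the base point and $x_2$ as the evaluation point,
\begin{equation*}
f(x_2)\geq f(x_1)+\nabla f(x_1)^T(x_2-x_1)+\frac{m}{2}\|x_2-x_1\|^2,
\end{equation*}
and then interchange the roles of the two points to obtain
\begin{equation*}
f(x_1)\geq f(x_2)+\nabla f(x_2)^T(x_1-x_2)+\frac{m}{2}\|x_1-x_2\|^2.
\end{equation*}
Both inequalities are legitimate, since $\mathbf{dom}\,f$ is convex and the estimate in Definition~\ref{strongconvex} holds for every ordered pair of points in the domain.

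Next I would add the two inequalities. The terms $f(x_1)$ and $f(x_2)$ appear on both sides and cancel, and using the symmetry $\|x_1-x_2\|=\|x_2-x_1\|$ to merge the two quadratic terms leaves
\begin{equation*}
0\geq \nabla f(x_1)^T(x_2-x_1)+\nabla f(x_2)^T(x_1-x_2)+m\|x_2-x_1\|^2.
\end{equation*}
Rewriting $\nabla f(x_2)^T(x_1-x_2)=-\nabla f(x_2)^T(x_2-x_1)$ and grouping the two inner products then gives $(\nabla f(x_2)-\nabla f(x_1))^T(x_2-x_1)\geq m\|x_2-x_1\|^2$, which is precisely the claimed inequality.

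This argument is entirely elementary, so I do not anticipate a genuine obstacle; the only point requiring care is the sign bookkeeping when merging the two gradient inner products, since a single sign slip would reverse the direction of the final inequality. I also read the term $\tfrac{m}{2}\|x_2-x_1\|$ in Definition~\ref{strongconvex} as the squared norm $\tfrac{m}{2}\|x_2-x_1\|^2$, which is what makes the two quadratic contributions combine into $m\|x_2-x_1\|^2$ and is necessary for the stated conclusion to hold.
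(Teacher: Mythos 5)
Your symmetrization argument is correct and complete: applying the strong-convexity bound of Definition~\ref{strongconvex} to the pairs $(x_1,x_2)$ and $(x_2,x_1)$, summing, and regrouping the inner products is exactly the standard textbook derivation of gradient monotonicity. The paper itself gives no proof --- Lemma~\ref{lemmaStronglyConvex} is simply cited from \cite{ruszczynski2011nonlinear} --- so there is nothing to diverge from; your proof is what the cited reference supplies. You were also right to flag and repair the typo in Definition~\ref{strongconvex}: the term $\frac{m}{2}\left\|x_2-x_1\right\|$ there must be read as $\frac{m}{2}\left\|x_2-x_1\right\|^2$, since otherwise the two contributions would merge into $m\left\|x_2-x_1\right\|$ rather than the claimed quadratic bound (and the unsquared variant is not the standard notion of $m$-strong convexity).
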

 \begin{definition}[Normal Cone and Tangent Cone \cite{ruszczynski2011nonlinear}]
         Let $\Omega \subset \mathbb{R}^{n}$ be a closed, convex and nonempty set. 
         The unit normal cone and tangent cone at $x \in \Omega$ are defined as $ n_{\Omega}(x) \triangleq \left \{ z \in \mathbb{R}^{n} \mid  \left \|z \right \|=1, z^\top(y-x)\le 0, \forall y \in \Omega  \right \}$ and $ T_{\Omega}(x) \triangleq \left \{ \displaystyle \lim_{k\to \infty } \frac{x_k - x}{\tau_{k}}\mid  x_{k}\in \Omega, x_{k}\rightarrow x, \tau_{k}>0, \tau_{k}\rightarrow0  \right \}$.
    \end{definition}
    
For a closed convex set $\Omega$, a point $x\in \Omega$ and a direction $v \in \mathbb R^n $, the differentiated projection operator is defined as $P_{T_\Omega(x)}(v)= \lim_{\varepsilon  \rightarrow 0}\frac{P_\Omega(x+\varepsilon v)-x}{\varepsilon}$, where $P_{\Omega}(x) = \mathrm{arg}\min_{y \in \Omega}\left \|y-x \right \|$ is the Euclidean projection operator.
    
    \begin{lemma}[\cite{BROGLIATO2006ontheequivalence,yi2016initialization}] \label{propertiesDifferentialprojection}
        The differentiated projection operator has the following properties: 
        i) If $x \in \mathrm{int}(\Omega)$, then $P_{T_\Omega(x)}(v)=v$ ; ii) $x\in \partial \Omega$, and $\displaystyle\max_{z \in n_\Omega(x)}v^\top z\leq0$, then $P_{T_\Omega(x)}(v)=v$; iii) $x\in \partial \Omega$, and $\displaystyle\max_{z \in n_\Omega(x)}v^\top z\geq0$, then $P_{T_\Omega(x)}(v)=v-v^\top z^*z^*$, where $z^*=\mathrm{arg}\displaystyle\min_{z \in n_\Omega(x)}v^\top z$. 
    \end{lemma}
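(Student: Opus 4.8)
The plan is to reduce all three claims to a single structural fact: the differentiated projection $P_{T_\Omega(x)}(v)$ defined in (\ref{differentiatedprojection}) coincides with the Euclidean projection of the direction $v$ onto the tangent cone $T_\Omega(x)$, which I will denote $\Pi_{T}(v)$, writing $T=T_\Omega(x)$ and letting $N=N_\Omega(x)$ be the normal cone generated by the unit normals. Granting this identification, claim i) is immediate: when $x\in\mathrm{int}(\Omega)$ there is a ball $B(x,r)\subset\Omega$, so for every $\varepsilon<r/\lVert v\rVert$ the point $x+\varepsilon v$ already lies in $\Omega$ and hence $P_\Omega(x+\varepsilon v)=x+\varepsilon v$; the difference quotient is then exactly $v$ for all small $\varepsilon$, giving $P_{T_\Omega(x)}(v)=v$. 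Equivalently, at an interior point $T=\mathbb{R}^n$ and $\Pi_{T}(v)=v$.

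The substantive step, and the part I expect to be the main obstacle, is proving the identification $P_{T_\Omega(x)}(v)=\Pi_{T}(v)$ at boundary points, i.e.\ that the metric projection onto a closed convex set is directionally differentiable at $x$ with derivative equal to the projection onto the tangent cone. I would argue from the variational characterization $(x+\varepsilon v-P_\Omega(x+\varepsilon v))^T(y-P_\Omega(x+\varepsilon v))\le 0$ for all $y\in\Omega$. Firm nonexpansiveness of $P_\Omega$ bounds the difference quotients $w_\varepsilon=(P_\Omega(x+\varepsilon v)-x)/\varepsilon$ uniformly, so a subsequential limit $w_0$ exists; since $P_\Omega(x+\varepsilon v)\in\Omega$ and $P_\Omega(x+\varepsilon v)\to x$, the definition of the tangent cone forces $w_0\in T$, and passing to the limit in a rescaled form of the variational inequality yields $(v-w_0)^T(u-w_0)\le 0$ for all $u\in T$, which is exactly the inequality characterizing $\Pi_{T}(v)$. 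Uniqueness of the projection then pins down the full one-sided ($\varepsilon\to 0^+$) limit. This directional-derivative result is the content of the cited references, so in the write-up I would invoke it and concentrate the remaining work on the two cone computations.

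With the identification in hand, claims ii) and iii) follow from convex-cone duality and Moreau's decomposition. For a closed convex $\Omega$ the tangent and normal cones are mutually polar, $T=N^\circ$. In ii), the hypothesis $\max_{z\in N,\lVert z\rVert=1}v^Tz\le 0$ is, by positive homogeneity of $N$, equivalent to $v^Tz\le 0$ for all $z\in N$, i.e.\ $v\in N^\circ=T$; projecting a point already in the cone returns it, so $P_{T_\Omega(x)}(v)=\Pi_{T}(v)=v$.

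For iii), the hypothesis places $v\notin T$, so a nonzero normal component must be removed. Moreau's theorem gives the orthogonal decomposition $v=\Pi_{T}(v)+\Pi_{N}(v)$ with $\Pi_{T}(v)^T\Pi_{N}(v)=0$. When the binding normal direction is a single unit vector $z^*$, the active outward normal among $\{z\in N:\lVert z\rVert=1\}$, the normal-cone component reduces to a projection onto the ray $\{t z^*:t\ge 0\}$, namely $\Pi_{N}(v)=(v^Tz^*)z^*$, whence $P_{T_\Omega(x)}(v)=\Pi_T(v)=v-v^Tz^*z^*$, matching the stated formula. I would verify consistency through $\Pi_T(v)^Tz^*=v^Tz^*-v^Tz^*=0$, confirming orthogonality of the two components and that the residual lies in $T$. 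The only delicate point is identifying the binding $z^*$ correctly as the outward normal along which $v$ has a nonnegative component; tracking the sign and extremum convention so that $z^*$ is exactly this direction is where I would be most careful.
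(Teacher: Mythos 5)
A point of reference first: the paper contains no proof of this lemma at all --- it is imported verbatim from the cited sources \cite{BROGLIATO2006ontheequivalence,yi2016initialization} --- so your attempt can only be compared with the standard argument in those references. Your route is exactly that standard one: identify the differentiated projection with the metric projection onto the tangent cone (Zarantonello's directional differentiability of $P_\Omega$, which you sketch correctly via firm nonexpansiveness and the limiting variational inequality), then dispatch i) trivially and ii) via the polarity $T_\Omega(x)=N_\Omega(x)^\circ$. Those parts are sound. However, two genuine issues remain in iii).

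First, your proof of iii) is conditioned on the phrase ``when the binding normal direction is a single unit vector $z^*$,'' which covers only smooth boundary points. The lemma is stated for an arbitrary closed convex $\Omega$, whose boundary may have corners at which $N_\Omega(x)$ is a higher-dimensional cone; there $\Pi_{N}(v)$ is a priori a conic combination of several unit normals, and it is not obvious that it has the rank-one form $(v^Tz^*)z^*$. The missing step is a short ray-reduction argument: every point of the cone $N$ lies on a ray $\{tz: t\ge 0\}$ with $z\in N$, $\|z\|=1$, and $\min_{t\ge0}\|v-tz\|^2=\|v\|^2-\bigl(\max\{v^Tz,0\}\bigr)^2$, so minimizing over all of $N$ amounts to maximizing $v^Tz$ over unit normals; when that maximum is positive, $\Pi_N(v)=(v^Tz^*)z^*$ with $z^*$ the maximizing unit normal (and uniqueness of the projection onto the convex cone $N$ even forces that maximizer to be unique). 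Adding these two lines closes your Moreau argument in full generality, corners included.

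Second, you flagged the extremum convention as the delicate point but left it unresolved, and it cannot be left open: your argument correctly produces $z^*=\mathrm{arg}\max_{z\in N_\Omega(x),\|z\|=1}v^Tz$, which is what the cited references state, whereas the statement as printed here says $\mathrm{arg}\min$. With $\mathrm{arg}\min$ the formula is simply false. Concretely, take $\Omega$ the first quadrant of $\mathbb{R}^2$, $x=\mathbf{0}_2$, $v=(-1,2)^T$; then $T_\Omega(x)$ is the first quadrant, $N_\Omega(x)$ the third, and $P_{T_\Omega(x)}(v)=(0,2)^T$. The maximizer $z^*=(-1,0)^T$ gives $v-(v^Tz^*)z^*=(0,2)^T$, as required, while the minimizer $(0,-1)^T$ gives $(-1,0)^T$, which is not even in $T_\Omega(x)$. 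So your write-up should state explicitly that it proves the $\mathrm{arg}\max$ version and that the printed $\mathrm{arg}\min$ is a typo (the two coincide only in the degenerate case where $n_\Omega(x)$ is a singleton, i.e., at smooth boundary points).
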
       
\subsection{Prescribed-Time Convergence} 
\begin{definition}[\cite{liu2023multiobjective}]
    A system $\dot{x}(t)=T(t,t_{\rm pre})f\left (x(t) \right )$ is said to achieve prescribed-time approximate convergence at $t_{\rm pre} $ if for any $x(0)$, there exists $0<\epsilon=\epsilon(x(0))$ such that the following three conditions hold
    \begin{equation}
        \left\{\begin{matrix}
            \lim_{t\rightarrow t_{\rm pre +}}\left \|x(t) \right \|\leq \epsilon\\ 
            \left \|x({t}^{\prime}) \right \|\leq \epsilon, \forall {t}^{\prime}>t_{\rm pre }\\ 
            \lim_{t\rightarrow \infty}\left \|x(t) \right \|=0
    \end{matrix}\right.
    \end{equation}
    where $t_{\rm pre}$ is a user-assignable time without dependence on initial states, and $ T(t,t_{\rm pre}):\mathbb{R}_+\times \mathbb{R}_{++}\to \mathbb{R}$ is a TBG.
\end{definition}

Given a TBG defined as  
\begin{equation}\label{definitionTBG}
    T(t,t_{\rm pre}) = \frac{\mathrm{d}\gamma (t,\sigma )}{\mathrm{d}t},
\end{equation}
where $\gamma (t,\sigma )$ satisfies 
i) $ 0 < \sigma \ll 1 $;  
ii) $ \lim_{\sigma \to 0_+} \left[ \gamma (t_{\rm pre+}, \sigma) - \gamma (0, \sigma) \right] = +\infty $;  
iii) $ \gamma (t, \sigma) - \gamma (t_{\rm pre+}, \sigma) \geq 0, \, \forall t > t_{\rm pre+} $;  
iv) $ \lim_{t \to +\infty} \left[ \gamma (t, \sigma) - \gamma (0, \sigma) \right] = +\infty $.  The following lemma establishes the condition for prescribed-time convergence.

\begin{lemma}[\cite{liu2023multiobjective}]\label{prescribedcondition}
    Suppose that there is a Lyapunov function $V(x):\mathbb{R}\rightarrow\mathbb{R}$ satisfying that $\exists \xi >0$ such that $V(x)\geq \xi \left \| x \right \|^2$ and
    \begin{equation}
        \dot{V}\left ( x(t)\right ) \leq - \zeta  T(t,t_{\rm pre})V\left ( x(t)\right )
    \end{equation}
    where $T(t,t_{\rm pre})$ is defined in (\ref{definitionTBG}), and then the origin of system $\dot{x}(t)=T(t,t_{\rm pre})f\left (x(t) \right )$ achieves prescribed-time approximate convergence at $t_{\rm pre}$. In addition, $\epsilon=\sqrt{e^{-\zeta  \left ( \gamma (t_{\rm pre +})-\gamma(0)\right )}\frac{V\left ( x(0)\right )}{\xi}}$.
\end{lemma}

\subsection{Problem Formulation}
Consider a group of $N$ agents collaboratively solving a DCMRAP with local and global constraints. Each agent $i$ has $K$ conflicting objectives:
\begin{equation}\label{problemMultiobjectiveresourceallocation}
    \begin{aligned}
        \min_{\boldsymbol{x}} &\ \left \{ f_i^1(x_i),\dots,f_i^K(x_i)\right \}, \forall i \in \mathcal{V}, \\
        {\rm s.t.}&\  \sum_{i=1}^{N}x_i=D,\quad x_i\in \Omega_i,\\
    \end{aligned}
\end{equation}
where $\boldsymbol{x}=\left [x_1,\dots,x_N\right ]^\top  \in \mathbb{R}^{N}$ is the global decision variable, $x_i \in \mathbb{R}$ is the local decision variable of agent $i$, $f_i^k(x_i):\ \mathbb{R} \rightarrow \mathbb{R}$ is the $k$th objective function of agent $i$ for $k \in \mathcal{K}$, $\Omega_i$ is the local constraint in the form of a convex compact set, and $D=\sum_{i=1}^{N}d_i$ is the total demand, with $d_i$ representing the local demand of agent $i$. Letting $\boldsymbol{\Omega} = \Omega_1\times,\dots,\times\Omega_N$, the feasible domain is $\boldsymbol{\mathcal{X}}=\left \{ \boldsymbol{x} \in \boldsymbol{\Omega} \mid  \mathbf{1}_N^\top  \boldsymbol{x}=D\right \}$. Note that $x_i, f_i^k, d_i$ and $\Omega_i$ are known only to the local agent $i$.  
\begin{assumption}\label{MultiAssumptionGraph}
    The graph is undirected and connected. 
\end{assumption}
\begin{assumption}\label{assumptionMultiConvex}
    Each local objective function $f_i^k$ is continuously differentiable and $m_i^k$-strongly convex.
\end{assumption}
\begin{assumption}[Slater's condition]\label{assumptionMultiSlater}
    There exists at least an interior point $x_i \in \mathrm{int}(\Omega_i)$ such that the global equality constraint $\sum_{i=1}^{N}x_i=D$ for all $i \in \mathcal{V}$ and $k \in \mathcal{K}$.
\end{assumption}

To solve problem (\ref{problemMultiobjectiveresourceallocation}) in a distributed manner, inspired by \cite{li2021multiobjective}, the weighted $L_p$ preference problem is given by 
\begin{equation}\label{problemREMultiobjectiveresourceallocation}
    \begin{aligned}
        \min_{\boldsymbol{x}} &\ U(\boldsymbol{x},\hat{\boldsymbol{x}}^{*},\boldsymbol{\omega}^{*})= \displaystyle \sum_{i=1}^{N} u_i(x_i,\hat{x}_i^{*},\omega_i^{*})\\
        &\ =\displaystyle \sum_{i=1}^{N}\left ( \displaystyle \sum_{k=1}^{K} \omega_i^{k*}\left ( f_i^k(x_i)- f_i^k(\hat{x}_i^{k*})\right )^p  \right)^{\frac{1}{p}},\\
        & \ 1\leq p< \infty,\\
        {\rm s.t.}&\  \sum_{i=1}^{N}x_i=D,\quad x_i\in \Omega_i,\\
    \end{aligned}
\end{equation}
where $u_i(x_i,\hat{x}_i^{*},\omega_i^{*})= \left ( \sum_{k=1}^{K} \omega_i^{k*}\left ( f_i^k(x_i)- f_i^k(\hat{x}_i^{k*})\right )^p  \right)^{\frac{1}{p}}$; $\boldsymbol{\omega}^{*}=\left [\omega_1^{*},\dots,\omega_N^{*}\right ]^\top  \in \mathbb{R}^{KN}_{++}$ collects the local positive weighting vectors $\omega_i^{*}=\left [\omega_i^{1*},\dots,\omega_i^{K*}\right ]^\top  \in \mathbb{R}^{K}_{++}$ specified by each agent $i$, with $\sum_{k=1}^{K}\omega_i^{k*}=1$; $f_i^k(\hat{x}_i^{k*})$ denotes the ideal point of agent $i$ for objective $k$; and $\hat{\boldsymbol{x}}^{*}=\left [\hat{x}_1^{*},\dots,\hat{x}_N^{*}\right ]^\top  \in \mathbb{R}^{KN}$ concatenates each agent’s ideal decision vector $\hat{x}_i^{*}=\left [\hat{x}_i^{1*},\dots,\hat{x}_i^{K*}\right ]^\top  \in \mathbb{R}^{K}$, with $\hat{x}_i^{k*}$ determined by the local optimization:
\begin{equation}\label{ideal}
    \hat{x}_i^{k*} = \arg\min_{\hat{x}_i^{k} \in \Omega_i} f_i^k(\hat{x}_i^k).
\end{equation}

\begin{lemma}\label{lemmaconvexu}
    $u_i(x_i,\hat{x}_i^{*},\omega_i^{*})$ defined in (\ref{problemREMultiobjectiveresourceallocation}) is strictly convex for all $i \in \mathcal{V}$. 
\end{lemma}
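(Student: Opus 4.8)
The plan is to exploit the composite structure $u_i = \phi\circ\mathbf g_i$, where $\mathbf g_i(x_i)=\left(g_i^1(x_i),\dots,g_i^K(x_i)\right)$ with $g_i^k(x_i):=f_i^k(x_i)-f_i^k(\hat x_i^{k*})$, and $\phi(y):=\left(\sum_{k=1}^K\omega_i^{k*}y_k^p\right)^{1/p}$ is the weighted $\ell_p$ norm restricted to $\mathbb{R}^K_+$. First I would record two facts about the inner maps. By Assumption \ref{assumptionMultiConvex} each $f_i^k$ is $m_i^k$-strongly convex, so $g_i^k$, being a constant shift of $f_i^k$, is strictly convex on the convex set $\Omega_i$; and since $f_i^k(\hat x_i^{k*})$ denotes the ideal point, i.e. the minimal attainable value of the $k$th objective, we have $g_i^k(x_i)\ge 0$ on $\Omega_i$. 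This nonnegativity is exactly what makes $u_i$ well defined for fractional $p$ and, as will be seen, is essential to the strictness.

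Next I would verify that $\phi$ is convex on $\mathbb{R}^K_+$. The cleanest route is the substitution $\tilde y_k=(\omega_i^{k*})^{1/p}y_k$, under which $\phi(y)=\|\tilde y\|_p$ is the standard $\ell_p$ norm; convexity then follows from Minkowski's inequality (equivalently, $\phi$ is a norm, hence convex and positively homogeneous). This immediately yields the chord bound $\phi\!\left(\theta\,\mathbf g_i(x_1)+(1-\theta)\mathbf g_i(x_2)\right)\le \theta\,u_i(x_1)+(1-\theta)\,u_i(x_2)$ for any $x_1,x_2\in\Omega_i$ and $\theta\in(0,1)$, using $\phi(\mathbf g_i(x))=u_i(x)$.

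The strictness is the delicate part, and I would route it entirely through the inner functions rather than through $\phi$. Fix $x_1\ne x_2$ in $\Omega_i$, $\theta\in(0,1)$, and set $x_\theta=\theta x_1+(1-\theta)x_2$. Strict convexity of each $g_i^k$ gives the strict componentwise inequality $0\le g_i^k(x_\theta)<\theta g_i^k(x_1)+(1-\theta)g_i^k(x_2)$ for every $k$. Because $t\mapsto t^p$ is strictly increasing on $[0,\infty)$ and each $\omega_i^{k*}>0$, raising these to the $p$th power, scaling by the weights, and summing preserves strictness; applying the increasing map $t\mapsto t^{1/p}$ then gives $u_i(x_\theta)=\phi(\mathbf g_i(x_\theta))<\phi\!\left(\theta\,\mathbf g_i(x_1)+(1-\theta)\mathbf g_i(x_2)\right)$. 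Chaining this strict inequality with the chord bound from the previous step produces $u_i(x_\theta)<\theta\,u_i(x_1)+(1-\theta)\,u_i(x_2)$, which is precisely strict convexity.

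The main obstacle to anticipate is that the outer norm $\phi$ is itself neither strictly convex (it is linear along rays) nor strictly increasing at the boundary of $\mathbb{R}^K_+$ (for $p>1$ its partial derivatives vanish wherever a coordinate is zero). A textbook ``strictly convex composed with nondecreasing convex'' argument therefore fails. The crux of the proof is that the strict monotonicity of $t\mapsto t^p$ on the nonnegative half-line, used together with the nonnegativity $g_i^k\ge 0$ supplied by the ideal points and the positivity of the weights $\omega_i^{k*}$, transfers the strict drop produced by the strictly convex $g_i^k$ through the inner sum; this lets $\phi$ be merely (non-strictly) convex and monotone. I would state the ideal-point nonnegativity explicitly, since it is the hypothesis that confines every intermediate quantity to the region where $t\mapsto t^p$ is genuinely strictly increasing.
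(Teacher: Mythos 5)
Your proof is correct, and while it starts from the same composite structure as the paper's proof (weighted $\ell_p$ outer function composed with the shifted objectives $f_i^k(x_i)-f_i^k(\hat x_i^{k*})$), it diverges precisely at the step where the paper is weakest. The paper's proof asserts in one line that the $p$-norm is ``strictly convex and monotonically increasing'' on the region $f_i^k(x_i)\geq f_i^k(\hat x_i^{k*})$ and then invokes the standard composition rule; as you correctly point out, this assertion is false as stated --- the $p$-norm is positively homogeneous, hence linear along rays and never strictly convex, and for $p>1$ its partial derivatives vanish where a coordinate is zero, so it is not strictly increasing on the boundary of $\mathbb{R}^K_+$. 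Your repair --- routing all strictness through the inner functions via the componentwise strict inequality $g_i^k(x_\theta)<\theta g_i^k(x_1)+(1-\theta)g_i^k(x_2)$, transferring it through the weighted sum using the strict monotonicity of $t\mapsto t^p$ on $[0,\infty)$ together with the nonnegativity supplied by the ideal points and the positivity of the weights, and only then applying the non-strict Minkowski chord bound --- is exactly what is needed to make the argument airtight, and it also handles $p=1$ uniformly. One further point in your favor: the paper's proof concludes that $u_i$ is \emph{strongly} convex, which does not follow from its argument (and is instead imposed separately in Assumption~\ref{assumptionconvexu}); your proof establishes strict convexity, which is precisely what Lemma~\ref{lemmaconvexu} claims. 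In short, you prove the same statement by the same decomposition, but your treatment of the strictness step is rigorous where the paper's is not.
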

\begin{IEEEproof}
    Because $p$-norm function is convex and monotonically increasing, $u_i(f_i^k(x_i))$ is a strictly convex and monotonically increasing function on $f_i^k(x_i)\geq f_i^k(\hat{x}_i^{k*})$. Also, all the objective functions $f_i^k(x_i)$ are strongly convex on $x_i$. Thus, $u_i(x_i,\hat{x}_i^{*},\omega_i^{*})$ is strictly convex on $x_i$. 
\end{IEEEproof}
\begin{assumption}\label{assumptionconvexu}
  Each local preference index $u_i(x_i,\hat{x}_i^{*},\omega_i^{*})$ is continuously differentiable and $\varpi_i$-strongly convex 
\end{assumption}

As described in \cite{li2021multiobjective}, the weight coefficients can be selected based on the relative importance of objective $k$ compared to the total objective value, formulated as:
\begin{equation}\label{8}
    \omega_i^{k*} = \frac{\left |f_i^k(\bar{x}_i^{k*}) \right |}{\sum_{j=1}^{K}\left |f_i^j(\bar{x}_i^{j*}) \right |},
\end{equation}
where $\bar{x}_i^{k*}\in \mathbb{R}$ denotes the optimizer for the summation of each type of objective function subject to both the local and global constraints, that is, for all $k \in \mathcal{K}$,
\begin{equation}\label{problemforkthobjective}
    \begin{aligned}
        \bar{x}_i^{k*} = \arg\min_{\bar{x}_i^k, \forall i \in \mathcal{V}} &\ \sum_{i=1}^{N} f_i^k(\bar{x}_i^k), \quad k \in \mathcal{K}\\
        {\rm s.t.}&\  \sum_{i=1}^{N}\bar{x}_i^k=D,\quad \bar{x}_i^k\in \Omega_i.\\
    \end{aligned}
\end{equation}

To this end, the DCMRAP (\ref{problemMultiobjectiveresourceallocation}) is transformed into a single-objective optimization formulation (\ref{problemREMultiobjectiveresourceallocation}), enabling resolution through DO techniques.

\section{MAIN RESULTS}\label{section3}
\subsection{Event-Triggered Prescribed-Time Distributed Algorithm}
Based on the above discussion, a DO algorithm is proposed for the problem (\ref{problemREMultiobjectiveresourceallocation}) based on prescribed-time control and dynamic ETMs. Firstly, to solve (\ref{problemforkthobjective}) for all $k \in \mathcal{K}$, a distributed protocol is given as 
\begin{subequations}\label{solutionforsubproblem}
    \begin{align}
        \dot{\bar x}_{i}^{k} &= P_{T_{\Omega_i}(\bar{x}_i^k)}\left (  T_1(t,t_{\rm pre 1})\left ( y_i^k-\nabla f_i^k\left (\bar{x}_i^k\right )\right )\right ),\\
        \dot{y}_i^k & =   T_1(t,t_{\rm pre 1})\left ( -\sum_{j=1}^{N}a_{ij}\left (  \bar{y}_i^k-\bar{y}_j^k\right) - z_i^k + d_i-\bar{x}_i^k\right ),\\
        \dot{z}_{i}^{k} & =  T_1(t,t_{\rm pre 1})  \sum_{j=1}^{N}a_{ij} \left ( \bar{y}_{i}^{k}-\bar{y}_j^k \right),\\
        \omega_i^{k} &= \frac{\left |f_i^k(\bar{x}_i^{k}) \right |}{\sum_{j=1}^{K}\left |f_i^j(\bar{x}_i^{j}) \right |},
    \end{align}
\end{subequations}
where $T_1(t,t_{\rm pre 1})$ is the TBG defined in (\ref{definitionTBG}), with $t_{\rm pre 1}$ as the prescribed time for $\bar{x}_i^k, \forall i \in \mathcal{V}, \forall k \in \mathcal{K}$, to converge to the optimal solutions and for determining weighting coefficients; $\omega_i^{k}$ is the estimate of the weighting coefficient $\omega_i^{k*}$; $y_i^k, z_i^k\in \mathbb{R}$ are auxiliary variables with $z_i^k(0)=0$; and $\bar{y}_i^k(t) = y_i^k(t_{i,k}^\ell), t\in \left [t_{i,k}^\ell, t_{i,k}^{\ell+1}\right)$, with $\left \{ t_{i,k}^\ell \right \}$ being the local triggering time sequence for the $k$th objective of $i$th agent. The corresponding dynamic ETM is designed as 
\begin{equation}\label{thresholdETM1}
      t_{i,k}^{\ell+1} = \inf_{t\geq t_{i,k}^\ell}\Bigg\{t \mid  \alpha_i^k \left(\frac{1}{2\varsigma_i^k}+l_{ii}\right)\left \|e_{y,i}^k(t) \right \|^2 \geq \mathcal{H}_i^k(t)  \Bigg\},
\end{equation}
where $e_{y,i}^k(t) = \bar{y}_i^k(t) - y_i^k(t)$ is the local state error, $\mathcal{H}_i^k(t) = \eta_i^k(t) + \frac{\alpha_i^k \beta_i^k}{2} \bar{q}^k_i(t)$ is the dynamic threshold, $\bar{q}^k_i(t) = \frac{1}{2} \sum_{j=1}^N a_{ij} \left \|\bar{y}_i^k(t)-\bar{y}_j^k(t) \right \|^2 \geq 0$, and $\eta_i^k$ is an internal dynamic variable that evolves based on  
\begin{equation}\label{dynamicETM}
\begin{aligned}
    \dot{\eta}_i^k(t) &= T_1(t,t_{\rm pre 1})\left ( -\phi_i^k\eta_i^k(t) \right.\\& \left.
    \quad -\delta_i^k \left ( (\frac{1}{2\varsigma_i^k}+l_{ii})\left \|e_{y,i}^k(t) \right \|^2 - \frac{\beta_i^k}{2} \bar{q}^k_i(t) \right )\right ),
\end{aligned}
\end{equation}
where $\phi_i^k>0, \delta_i^k\in \left ( 0,1\right ], \alpha_i^k> \frac{1-\delta_i^k}{\phi_i^k}$, ${\eta}_i^k(0)>0$, $\beta_i^k \in (0,1)$ and $0<\varsigma_i^k < \min\left \{ \frac{m^k_{\min}}{3}, \frac{\lambda_2(L)(1- \beta_{\max}^k)}{6\psi_y^k}\right \}$ are control parameters with $m^k_{\min} = \min_{i\in\mathcal{V}}{m_i^k}$, $\beta_{\max}^k = \max_{i \in \mathcal{V}} \beta_i^k$ and $\psi_y^k$ specified in \eqref{CE1}. 

To solve problem \eqref{ideal}, the ideal point seeking algorithm is 
\begin{equation}\label{idealpoint}
    \dot{\hat{x}}_i^k = P_{T_{\Omega_i}(\hat{x}_i^k)}\left ( -T_2(t,t_{\rm pre 2}) \nabla f_i^k\left (\hat{x}_i^k\right )\right ),
\end{equation}
where $ T_2(t, t_{\rm pre 2}) $ is the TBG defined in (\ref{definitionTBG}), and $ t_{\rm pre 2} $ denotes the prescribed time for convergence of $\hat{x}_i^k$ and determining the ideal points.

Then, the compromised solution for (\ref{problemREMultiobjectiveresourceallocation}) is obtained by
\begin{subequations}\label{solutionforproblem}
    \begin{align}
        \dot{x}_{i} &= P_{T_{\Omega_i}(x_i)}\left (  T_3(t,t_{\rm pre 3})\left ( \nu_i-\nabla u_i(x_i,\hat{x}_i,\omega_i)\right )\right ),\\
        \dot{\nu}_i & =   T_3(t,t_{\rm pre 3})\left ( -\sum_{j=1}^{N}a_{ij}\left (  \bar{\nu}_i-\bar{\nu}_j\right)- \mu_i+d_i-x_i\right ),\\
        \dot{\mu}_{i} & =  T_3(t,t_{\rm pre 3}) \sum_{j=1}^{N}a_{ij} \left ( \bar{\nu}_{i}-\bar{\nu}_j \right),
    \end{align}
\end{subequations}
where $T_3(t,t_{\rm pre 3})$ is the TBG, with $t_{\rm pre 3}$ as the prescribed time for $\boldsymbol{x}$ to converge to the optimal solution; $\nu_i, \mu_i\in \mathbb{R}$ are auxiliary variables with $\mu_i(0) = 0$; $\bar{\nu}_i(t) = \nu_i(t_{i}^\ell), t\in \left [t_{i}^\ell, t_{i}^{\ell+1}\right)$; and $\left \{ t_{i}^\ell \right \}$ is the local triggering time sequence for $i$th agent. The corresponding dynamic ETM is designed as
\begin{equation}\label{thresholdETM2}
       t_{i}^{\ell+1} = \inf_{t\geq t_{i}^\ell}\Bigg\{t \mid  \alpha_i \left(\frac{1}{2\varsigma_i}+l_{ii}\right)\left \|e_{\nu,i}(t) \right \|^2 \geq \mathcal{H}_i(t)  \Bigg\},
\end{equation}
where $e_{\nu,i}(t) = \bar{\nu}_i(t) - \nu_i(t)$ is the local state error, $\mathcal{H}_i(t) = \eta_i(t) + \frac{\alpha_i \beta_i}{2} \bar{q}_i(t)$ is the dynamic threshold, $\bar{q}_i(t) = \frac{1}{2} \sum_{j=1}^N a_{ij} \left \|\bar{\nu}_i(t)-\bar{\nu}_j(t) \right \|^2 \geq 0$, and $\eta_i$ is updated by 
\begin{equation}\label{dynamicETM2}
    \begin{aligned}
    \dot{\eta}_i(t) &= T_3(t,t_{\rm pre 3})\left ( -\phi_i\eta_i(t) \right.\\& \left.
    \quad -\delta_i \left ( (\frac{1}{2\varsigma_i}+l_{ii})\left \|e_{\nu,i}(t) \right \|^2 - \frac{\beta_i}{2} \bar{q}_i(t) \right )\right ),
\end{aligned}
\end{equation}
where $\phi_i>0, \delta_i\in \left ( 0,1\right ], \alpha_i> \frac{1-\delta_i}{\phi_i}$, ${\eta}_i(0)>0$, $\beta_i \in (0,1)$ and $0<\varsigma_i < \min\left \{ \frac{\varpi_{\min} }{3}, \frac{\lambda_2(L)(1- \beta_{\max})}{6\psi_y}\right \}$ with $\varpi_{\min} = \min_{i\in\mathcal{V}}{\varpi_i}$, $\beta_{\max} = \max_{i \in \mathcal{V}} \beta_i$ and $\psi_y$ defined in \eqref{CE3}.

\begin{remark}
    While equations (\ref{solutionforsubproblem})--(\ref{dynamicETM}) and (\ref{solutionforproblem})--(\ref{dynamicETM2}) share the same prescribed-time control structure and ETM strategy, they serve distinct purposes: the former constructs the weighting coefficients online for each $k \in \mathcal{K}$ within $t_{\rm pre 1}$, while the latter integrates these results to compute a compromised solution for the DCMRAP at $t_{\rm pre3}$. Notably, equation (\ref{idealpoint}) operates independently of ETMs, as it involves only local computations.
\end{remark}
\begin{remark}
  Equations (\ref{solutionforsubproblem}a)–(\ref{solutionforsubproblem}c) implement an event-triggered prescribed-time protocol to solve the constrained subproblem \eqref{problemforkthobjective} in a fully distributed manner, yielding the optimizers $\bar{x}_i^{k*}$. Equation (\ref{solutionforsubproblem}d) online computes each agent’s local weighting coefficients via \eqref{8}, thereby biasing the aggregate objective in \eqref{problemREMultiobjectiveresourceallocation} toward the agent’s prioritized criteria and aligning the global allocation with heterogeneous preferences.
\end{remark}

\subsection{Convergence Analysis}
Before proving the convergence of the algorithm, we give the properties of the designed triggering mechanisms (\ref{dynamicETM}) and (\ref{dynamicETM2}), which are used in the proofs later.

\begin{lemma}[Positivity of $\eta_i^k(t)$ and $\eta_i(t)$]\label{lemmapositive}
    With properly designed parameters $\phi_i^k, \delta_i^k, \alpha_i^k, \beta_i^k, \varsigma_i^k$ for $\eta_i^k(t)$, and $\phi_i, \delta_i, \alpha_i, \beta_i, \varsigma_i$ for $\eta_i(t)$, both variables satisfy $\eta_i^k(t) > 0$ and $\eta_i(t) > 0$ for all $i \in \mathcal{V}$, $k \in \mathcal{K}$.
\end{lemma}

\begin{IEEEproof}
    From (\ref{thresholdETM1}) and (\ref{dynamicETM}), we have $\dot{\eta}_i^k(t) > -(\phi_i^k + \delta_i^k/\alpha_i^k) T_1(t,t_{\text{pre} 1})\eta_i^k(t)$. Together with $\eta_i^k(0)>0$, we have
    \begin{equation}\label{etageq0}
        \eta_i^k(t) > \eta_i^k(0) e^{ -(\phi_i^k + \delta_i^k/\alpha_i^k) \gamma_1(t,\sigma) }>0,
    \end{equation}
    for all $t\geq0$. It can likewise be inferred that the variable $\eta_i(t)$ remains positive for all $t \geq 0$.
\end{IEEEproof}

We first analyze the convergence of the module \eqref{solutionforsubproblem}--\eqref{dynamicETM}. The distributed dynamical system in (\ref{solutionforsubproblem}) can be rewritten in a compact form of 
\begin{subequations}\label{solutionforsubproblemCompact}
\begin{align}
        \dot{\bar{\boldsymbol{x}}}^{k} &= P_{T_{\boldsymbol{\Omega}}(\bar{\boldsymbol{x}}^k)}\left (  T_1(t,t_{\rm pre 1})\left ( \boldsymbol{y}^k-\nabla f^k\left (\bar{\boldsymbol{x}}^k\right )\right )\right ),\\
        \dot{\boldsymbol{y}}^k & =   T_1(t,t_{\rm pre 1})\left ( -\ L\boldsymbol{\bar{y}}^k- \boldsymbol{z}^k+\boldsymbol{d}-\bar{\boldsymbol{x}}^k\right ),\\
        \dot{\boldsymbol{z}}^{k} & =  T_1(t,t_{\rm pre 1})  L\boldsymbol{\bar{y}}^k,
\end{align} 
\end{subequations}
where $\boldsymbol{\Omega} = \Omega_1\times\dots\times\Omega_N$, $\bar{\boldsymbol{x}}^k = \left [ \bar{x}_1^k,\dots,\bar{x}_N^k\right ]^\top \in \mathbb{R}^N$, $\nabla f^k\left (\bar{\boldsymbol{x}}^k\right )=\left [ \nabla f_1^k\left (\bar{x}_1^k\right ),\dots,\nabla f_N^k\left (\bar{x}_N^k\right )\right ]^\top \in \mathbb{R}^N$, $\boldsymbol{y}^k = \left [ y_1^k,\dots,y_N^k\right ]^\top \in \mathbb{R}^N$, $\boldsymbol{\bar{y}}^k = \left [ \bar{y}_1^k,\dots,\bar{y}_N^k\right ]^\top \in \mathbb{R}^N$, $\boldsymbol{z}^k = \left [ z_1^k,\dots,z_N^k\right ]^\top \in \mathbb{R}^N$ and $\boldsymbol{d} = \left [ d_1,\dots,d_N\right ]^\top \in \mathbb{R}^N$.
\begin{lemma}[Optimality]\label{lemmaSubOptimality}
    Under Assumptions \ref{MultiAssumptionGraph}, \ref{assumptionMultiConvex} and \ref{assumptionMultiSlater}, for any bounded initial points $\bar{x}_i^k(0)\in \Omega_i, \forall i \in \mathcal{V}$, if $\left (\bar{\boldsymbol{x}}^{k*}, \boldsymbol{y}^{k*}, \boldsymbol{z}^{k*}\right )$ is the equilibrium of (\ref{solutionforsubproblemCompact}), then $\bar{\boldsymbol{x}}^{k*}$ is the optimal solution to problem (\ref{problemforkthobjective}).
\end{lemma}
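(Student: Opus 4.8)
The plan is to impose the equilibrium conditions on the compact dynamics (\ref{solutionforsubproblemCompact}) and show that they coincide with the Karush--Kuhn--Tucker (KKT) conditions of (\ref{problemforkthobjective}), which under convexity are both necessary and sufficient for optimality. At an equilibrium the state has settled, so no triggering error remains and $\bar{\boldsymbol{y}}^k=\boldsymbol{y}^{k*}$. Since the TBG $T_1(t,t_{\rm pre 1})$ is strictly positive, setting each right-hand side to zero I would first read off from the $\boldsymbol{z}$-equation that $L\boldsymbol{y}^{k*}=\mathbf{0}_N$; by Assumption \ref{MultiAssumptionGraph} (connectivity) the null space of $L$ is $\mathrm{span}(\mathbf{1}_N)$ (cf.\ Lemma \ref{orthogonaltransformation}), whence $\boldsymbol{y}^{k*}=c\,\mathbf{1}_N$ for some scalar $c$.

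Next I would establish primal feasibility. Substituting $L\boldsymbol{y}^{k*}=\mathbf{0}_N$ into the $\boldsymbol{y}$-equation gives $\boldsymbol{z}^{k*}=\boldsymbol{d}-\bar{\boldsymbol{x}}^{k*}$. Because $\mathbf{1}_N^T L=\mathbf{0}_N$, the quantity $\mathbf{1}_N^T\boldsymbol{z}^k$ is conserved along the flow, and the initialization $z_i^k(0)=0$ forces $\mathbf{1}_N^T\boldsymbol{z}^{k*}=0$; combined with the previous identity this yields $\mathbf{1}_N^T\bar{\boldsymbol{x}}^{k*}=\mathbf{1}_N^T\boldsymbol{d}=D$, i.e.\ the coupling constraint holds. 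The local constraints $\bar{x}_i^{k*}\in\Omega_i$ hold automatically because the differentiated projection keeps the trajectory within $\boldsymbol{\Omega}$.

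Then I would extract stationarity from the $\bar{\boldsymbol{x}}$-equation. Using the positive homogeneity of the differentiated projection (immediate from its definition (\ref{differentiatedprojection})) together with $T_1>0$, the equilibrium condition reduces to $P_{T_{\Omega_i}(\bar{x}_i^{k*})}\!\left(y_i^{k*}-\nabla f_i^k(\bar{x}_i^{k*})\right)=0$ for every $i$. Invoking Lemma \ref{propertiesDifferentialprojection}, a vanishing tangent-cone projection is equivalent to the vector lying in the normal cone, so $y_i^{k*}-\nabla f_i^k(\bar{x}_i^{k*})\in N_{\Omega_i}(\bar{x}_i^{k*})$. Writing $\lambda^*=-c$ and recalling $y_i^{k*}=c$ for all $i$, this reads $-\left(\nabla f_i^k(\bar{x}_i^{k*})+\lambda^*\right)\in N_{\Omega_i}(\bar{x}_i^{k*})$, which is precisely the stationarity (variational-inequality) condition of (\ref{problemforkthobjective}), with $\lambda^*$ serving as the multiplier of the equality constraint.

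Finally, I would close the argument by the sufficiency of KKT for convex programs: Assumption \ref{assumptionMultiConvex} makes the objective strongly convex while $\boldsymbol{\Omega}$ and the affine coupling constraint are convex, and Assumption \ref{assumptionMultiSlater} (Slater) guarantees existence of the multiplier, so feasibility plus stationarity certify global optimality. Hence $\bar{\boldsymbol{x}}^{k*}$ solves (\ref{problemforkthobjective}). I expect the main obstacle to be the rigorous translation of the stationary-projection identity $P_{T_{\Omega_i}(\bar{x}_i^{k*})}(\cdot)=0$ into normal-cone membership through the three cases of Lemma \ref{propertiesDifferentialprojection} (interior, active-boundary with inward direction, and active-boundary with outward component), together with the careful bookkeeping of signs that links the consensus value $c$ of $\boldsymbol{y}^{k*}$ to the Lagrange multiplier $\lambda^*$.
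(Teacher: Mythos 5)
Your proposal follows essentially the same route as the paper's proof: impose the equilibrium conditions on (\ref{solutionforsubproblemCompact}), deduce consensus $\boldsymbol{y}^{k*}=c\,\mathbf{1}_N$ from $L\boldsymbol{y}^{k*}=\mathbf{0}_N$, use the conservation of $\mathbf{1}_N^T\boldsymbol{z}^k$ under the initialization $z_i^k(0)=0$ to obtain $\mathbf{1}_N^T\bar{\boldsymbol{x}}^{k*}=D$, translate the vanishing tangent-cone projection into normal-cone membership via Lemma \ref{propertiesDifferentialprojection}, and conclude via the KKT conditions of the convex problem (\ref{problemforkthobjective}). Your version is in fact slightly more careful than the paper's on two minor points --- the explicit identification of the multiplier $\lambda^*=-c$ and the sign of the normal-cone inclusion $y_i^{k*}-\nabla f_i^k(\bar{x}_i^{k*})\in N_{\Omega_i}(\bar{x}_i^{k*})$ --- but these are refinements of the same argument, not a different one (and note that Slater's condition is not actually needed for the sufficiency direction you use, since you construct the multiplier explicitly).
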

\begin{IEEEproof}
From (\ref{solutionforsubproblem}), we have $\dot{\bar x}_{i}^{k}\in T_{\Omega_i}(\bar{x}_i^k)$ when $t\geq 0$. Based on \cite{Aubin1984differentialinclusions}, for any bounded $\bar{x}_i(0)\in \Omega_i$, $\bar{x}_i(t)\in \Omega_i, \forall t \geq 0, i \in \mathcal{V}$ hold. Then we have
\begin{subequations}\label{EQforsubproblemCompact}
    \begin{align}
        \mathbf{0}_N &= P_{T_{\Omega}(\bar{\boldsymbol{x}}^{k*})}\left (  \boldsymbol{y}^{k*}-\nabla f^k\left (\bar{\boldsymbol{x}}^{k*}\right )\right ),\\
        \mathbf{0}_N & =    - L\boldsymbol{y}^{k*}- \boldsymbol{z}^{k*}+\boldsymbol{d}-\bar{\boldsymbol{x}}^{k*},\\
        \mathbf{0}_N & =    L\boldsymbol{y}^{k*}.
    \end{align}
\end{subequations}
According to Lemma \ref{propertiesDifferentialprojection}, (\ref{EQforsubproblemCompact}a) satisfies $0 \in y_i^{k*}-\nabla f_i^k\left (\bar{x}_i^{k*}\right )+N_{\Omega_i}(\bar{x}_i^{k*})$. Moreover, based on the stochastic property of the Laplacian matrix $L$, one obtains $y_1^{k*} = y_2^{k*}=\cdots=y_N^{k*}$ from (\ref{EQforsubproblemCompact}c). Since $z^k_i(0)=0, \forall i\in \mathcal{V}$, $\sum_{i=1}^{N}\dot{z}_i^k = T_1(t,t_{\rm pre 1})(\mathbf{1}_N^\top L\boldsymbol{\bar{y}}^{k})=0 $ holds. This implies that $\sum_{i=1}^{N}z_i^k(t)\triangleq\sum_{i=1}^{N}z_i^k(0)\triangleq0$. By multiplying both sides of (\ref{EQforsubproblemCompact}b) by $\mathbf{1}_N^\top $, one obtains $\sum_{i=1}^{N} \bar{x}_i^{k*} = \sum_{i=1}^{N} d_i$. In conclusion, the equilibrium point $\left (\bar{\boldsymbol{x}}^{k*}, \boldsymbol{y}^{k*}, \boldsymbol{z}^{k*}\right )$ satisfies the Karush-Kuhn-Tucker (KKT) optimality conditions, and thus $\bar{\boldsymbol{x}}^{k*}$ is the optimal global decision variable of problem (\ref{problemforkthobjective}).
\end{IEEEproof}
\begin{lemma}\label{lemmasub}
    Under Assumptions \ref{MultiAssumptionGraph}, \ref{assumptionMultiConvex} and \ref{assumptionMultiSlater}, for all $k \in \mathcal{K}$, the proposed dynamical system (\ref{solutionforsubproblem}) with the dynamic ETM (\ref{thresholdETM1})-(\ref{dynamicETM}) solves the optimization problem (\ref{problemforkthobjective}) at the prescribed time $t_{\rm pre1}$, and the Zeno behavior is excluded, that is,
    \begin{equation}\label{CE1}
        \left\{\begin{array}{@{}l}
            \lim_{t\rightarrow t_{\rm pre 1+}}\left \|\bar{\boldsymbol{x}}^k(t)-\bar{\boldsymbol{x}}^{k*} \right \|\leq \sqrt{e^{-\frac{\kappa_1}{\theta_2^k} \left ( \gamma_1 (t_{\rm pre1 +})-\gamma_1(0)\right )}\frac{V^k\left ( 0\right )}{\theta_2^k}},\\ 
            \left \|\bar{\boldsymbol{x}}^k(t)-\bar{\boldsymbol{x}}^{k*} \right \|\leq \sqrt{e^{-\frac{\kappa_1}{\theta_2^k} \left ( \gamma_1 (t_{\rm pre 1+})-\gamma_1(0)\right )}\frac{V^k\left ( 0\right )}{\theta_2^k}}, \forall t>t_{\rm pre 1},\\ 
            \lim_{t\rightarrow +\infty}\left \|\bar{\boldsymbol{x}}^k(t)-\bar{\boldsymbol{x}}^{k*} \right \|=0,
        \end{array}\right.
    \end{equation}
    where $\kappa_1 = \min \left \{m^k_{\min}-3\varsigma_{\rm max}^k,\frac{\lambda_2(L)(1- \beta_{\max}^k)}{2\psi_y^k} - 3\varsigma_{\rm max}^k\right.$, $\left. \frac{\varsigma_{\rm max}^k}{2}, \frac{\psi_d^k}{2}  \right \}$, $\psi_d^k = \min_{i\in\mathcal{V}}\left \{ \phi_i^k-\frac{1-\delta_i^k}{\alpha_i^k}\right \} $, $\psi_y^k = \max \left \{2 +  \frac{\lambda_N(L)}{\min_{i\in\mathcal{V}}\left \{\frac{1}{2\varsigma_{\rm max}^k}+l_{ii} \right \}},\frac{2\lambda_N(L)(1-\beta^k_{\max})}{\psi_d^k\min_{i\in\mathcal{V}}\left \{\alpha_i^k(\frac{1}{2\varsigma_{\rm max}^k}+l_{ii}) \right \}} \right \}$, $\theta_2^k = \max \left \{1,\frac{1}{2}+4\varsigma_{\rm max}^k, \frac{1}{2\lambda_2(L)}+4\varsigma_{\rm max}^k  \right \}$, $\varsigma_{\rm max}^k = \max_{i \in \mathcal{V}}\varsigma_i^k$ with $0<\varsigma_i^k < \min\left \{ \frac{m^k_{\min}}{3}, \frac{\lambda_2(L)(1- \beta_{\max}^k)}{6\psi_y^k}\right \}$, $m^k_{\min} = \min_{i\in\mathcal{V}}{m_i^k}$ and $\beta_{\max}^k = \max_{i \in \mathcal{V}} \beta_i^k$. $V^k(0)$ is the initial value of the Lyapunov function defined in (\ref{LyapunovV1}). Moreover, the weighting parameters $\omega_i$ also converge to $\omega_i^*$ within $t_{\rm pre1}$.
\end{lemma}

\begin{IEEEproof}
    Inspired by \cite{GUO2022distributeddynamic,yi2016initialization}, from Lemma \ref{propertiesDifferentialprojection}, we have 
    \begin{subequations}\label{Dealwithprojection}
        \begin{align}
            & \quad P_{T_{\boldsymbol{\Omega}}(\bar{\boldsymbol{x}}^k)}\left (  T_1(t,t_{\rm pre 1})\left ( \boldsymbol{y}^k-\nabla f^k\left (\bar{\boldsymbol{x}}^k\right )\right )\right )\notag\\ & = 
             T_1(t,t_{\rm pre 1}) \left ( \boldsymbol{y}^k-\nabla f^k\left (\bar{\boldsymbol{x}}^k\right )\right ) - C_{\boldsymbol{\Omega}}(\bar{\boldsymbol{x}}^k),\\
            & \quad P_{T_{\boldsymbol{\Omega}}(\bar{\boldsymbol{x}}^{k*})} \left ( T_1(t,t_{\rm pre 1})\left ( \boldsymbol{y}^{k*}-\nabla f^k\left (\bar{\boldsymbol{x}}^{k*}\right )\right )\right )\notag\\  & = 
           T_1(t,t_{\rm pre 1}) \left ( \boldsymbol{y}^{k*}-\nabla f^k\left (\bar{\boldsymbol{x}}^{k*}\right )\right ) - C_{\boldsymbol{\Omega}}(\bar{\boldsymbol{x}}^{k*}),
        \end{align}
    \end{subequations}
where $C_{\boldsymbol{\Omega}}(\bar{\boldsymbol{x}}^k)=\left [ \rho(\bar{x}_1^k)c_{\Omega_1}(\bar{x}_1^k),\dots,\rho(\bar{x}_N^k)c_{\Omega_N}(\bar{x}_N^k)\right ]^\top $ with $\rho(\bar{x}_i^k)\geq 0$, $c_{\Omega_i}(\bar{x}_i^k)\in n_{\Omega_i}(\bar{x}_i^k), i\in \mathcal{V}$.

Define $\Tilde{\boldsymbol{x}}^k = \bar{\boldsymbol{x}}^k-\bar{\boldsymbol{x}}^{k*}$, $\Tilde{\boldsymbol{y}}^k = \boldsymbol{y}^k-\boldsymbol{y}^{k*}$ and $\Tilde{\boldsymbol{z}}^k = \boldsymbol{z}^k-\boldsymbol{z}^{k*}$. From (\ref{solutionforsubproblemCompact}), (\ref{EQforsubproblemCompact}) and (\ref{Dealwithprojection}), we have
\begin{subequations}\label{ErrorforsubproblemCompact}
\begin{align}
        \dot{\Tilde {\boldsymbol{x}}}^{k} &= T_1(t,t_{\rm pre 1})\left (-\boldsymbol{h}^k+\Tilde{\boldsymbol{y}}^k \right )-C_{\boldsymbol{\Omega}}(\bar{\boldsymbol{x}}^k)+C_{\boldsymbol{\Omega}}(\bar{\boldsymbol{x}}^{k*}),\\
        \dot{\Tilde {\boldsymbol{y}}}^k & =   T_1(t,t_{\rm pre 1})\left ( - L\left ( \Tilde{\boldsymbol{y}}^k + \boldsymbol{e}_y^k\right )- \Tilde{\boldsymbol{z}}^k-\Tilde{\boldsymbol{x}}^k\right ),\\
        \dot{\Tilde {\boldsymbol{z}}}^{k} & =  T_1(t,t_{\rm pre 1})  L\left ( \Tilde{\boldsymbol{y}}^k + \boldsymbol{e}_y^k\right ),
\end{align} 
\end{subequations}
where $\boldsymbol{h}^k = \nabla f^k\left (\bar{\boldsymbol{x}}^k\right )-\nabla f^k\left (\bar{\boldsymbol{x}}^{k*}\right )$ and $\boldsymbol{e}_{y}^k = \left [e_{y,1}^k,\dots,e_{y,N}^k \right ]^\top $. 

Consider the following candidate Lyapunov function
\begin{equation}\label{LyapunovV1}
    V^k(t) = V^k_1(t) + V^k_2(t) + V^k_3(t),
\end{equation}
where $ V^k_1(t) = \frac{1}{2}\left ((\Tilde{\boldsymbol{x}}^k)^\top \Tilde{\boldsymbol{x}}^k+ (\Tilde{\boldsymbol{y}}^k)^\top \Tilde{\boldsymbol{y}}^k + (\Tilde{\boldsymbol{z}}^k)^\top  \varGamma \Tilde{\boldsymbol{z}}^k\right )$, $V^k_2(t) =   2\varsigma_{\rm max}^k\left (\Tilde{\boldsymbol{y}}^k+\Tilde{\boldsymbol{z}}^k \right )^\top \left (\Tilde{\boldsymbol{y}}^k+ \Tilde{\boldsymbol{z}}^k \right )$, $V^k_3(t)  = \sum_{i=1}^N \eta_i^k(t)$, $\varGamma  = Q\mathrm{diag}\left \{1,\Lambda^{-1} \right \}Q^\top $ defined in (\ref{orthogonaltransformation}), $\varsigma_{\rm max}^k = \max_{i \in \mathcal{V}}\varsigma_i^k$, and $\varsigma_i^k$ is a positive constant which will be defined later. 

Let $\boldsymbol{\varphi }^k = (\Tilde{\boldsymbol{x}}^k)^\top \Tilde{\boldsymbol{x}}^k + (\Tilde{\boldsymbol{y}}^k)^\top \Tilde{\boldsymbol{y}}^k + (\Tilde{\boldsymbol{z}}^k)^\top \Tilde{\boldsymbol{z}}^k + \sum_{i=1}^N \eta_i^k(t)$, we have  
\begin{equation}\label{subprobleminequality}
    \theta_1^k \boldsymbol{\varphi}^k \leq V^k(t) \leq \theta_2^k \boldsymbol{\varphi}^k,
\end{equation}
where $\theta_1^k = \min \left \{\frac{1}{2}, \frac{1}{2\lambda_N(L)} \right \}$ and $\theta_2^k = \max \left \{1,\frac{1}{2}+4\varsigma_{\rm max}^k, \frac{1}{2\lambda_2(L)}+4\varsigma_{\rm max}^k  \right \}$. 

Taking the derivative of $V^k_1(t)$, we have 
\begin{equation}\label{deV1}
    \begin{aligned}
    \dot{V}^k_1 =& T_1(t,t_{\rm pre 1})\left [-(\Tilde{\boldsymbol{x}}^k)^\top \boldsymbol{h}^k- (\Tilde{\boldsymbol{y}}^k)^\top  L \Tilde{\boldsymbol{y}}^k- (\Tilde{\boldsymbol{y}}^k)^\top   \Tilde{\boldsymbol{z}}^k\right. \\& \left. + (\Tilde{\boldsymbol{z}}^k)^\top  \varGamma L\Tilde{\boldsymbol{y}}^k - (\Tilde{\boldsymbol{y}}^k)^\top L\boldsymbol{e}_y^k + (\Tilde{\boldsymbol{z}}^k)^\top \varGamma L\boldsymbol{e}_y^k \right ] \\& + (\Tilde{\boldsymbol{x}}^k)^\top \left [ -C_{\boldsymbol{\Omega}}(\bar{\boldsymbol{x}}^k)+C_{\boldsymbol{\Omega}}(\bar{\boldsymbol{x}}^{k*})\right ].
    \end{aligned}
\end{equation}

From the definition of $n_{\Omega_i}(\bar{x}_i^k)$, we get $(\bar{x}_i^{k*}-\bar{x}_i^k)^\top  n_{\Omega_i}(\bar{x}_i^k)\leq 0$ and $(\bar{x}_i^{k}-\bar{x}_i^{k*})^\top  n_{\Omega_i}(\bar{x}_i^{k*})\leq 0$. Since $\rho(\bar{x}_i^k)c_{\Omega_N}(\bar{x}_i^k)\geq 0$, $\rho(\bar{x}_i^k)c_{\Omega_N}(\bar{x}_i^{k*})\geq 0$, $c_{\Omega_i}(\bar{x}_i^k)\in n_{\Omega_i}(\bar{x}_i^k)$ and $c_{\Omega_i}(\bar{x}_i^{k*})\in n_{\Omega_i}(\bar{x}_i^{k*})$, we have $(\bar{x}_i^{k*}-\bar{x}_i^k)^\top  c_{\Omega_i}(\bar{x}_i^k)\leq 0$ and  $(\bar{x}_i^{k}-\bar{x}_i^{k*})^\top  c_{\Omega_i}(\bar{x}_i^{k*})\leq 0$. Thus, based on the definitions of $\Tilde{\boldsymbol{x}}^k$ and $C_{\boldsymbol{\Omega}}(\bar{\boldsymbol{x}}^k)$, we obtain that the last term in (\ref{deV1}) satisfies 
\begin{equation}\label{substitute11}
    (\Tilde{\boldsymbol{x}}^k)^\top \left [ -C_{\boldsymbol{\Omega}}(\bar{\boldsymbol{x}}^k)+C_{\boldsymbol{\Omega}}(\bar{\boldsymbol{x}}^{k*})\right ] \leq 0.
\end{equation}

Based on Assumption \ref{assumptionMultiConvex} and Lemma \ref{lemmaStronglyConvex}, for the first term of (\ref{deV1}), we have 
\begin{equation}\label{34}
    \begin{aligned}
    -(\Tilde{\boldsymbol{x}}^k)^\top \boldsymbol{h}^k &= -(\Tilde{\boldsymbol{x}}^k)^\top \left [ \nabla f^k\left (\bar{\boldsymbol{x}}^k\right )-\nabla f^k\left (\bar{\boldsymbol{x}}^{k*}\right ) \right ]\\ & \leq -m^k_{\min} (\Tilde{\boldsymbol{x}}^k)^\top \Tilde{\boldsymbol{x}}^k.
    \end{aligned}
\end{equation}
where $m^k_{\min} = \min_{i\in\mathcal{V}}{m_i^k}$. According to (\ref{EQforsubproblemCompact}c), we have
\begin{equation}
\begin{aligned}
    &\quad -(\Tilde{\boldsymbol{y}}^k)^\top  L \left (\Tilde{\boldsymbol{y}}^k + \boldsymbol{e}_y^k\right ) \\ 
    &= - (\boldsymbol{y}^k)^\top  L \bar{\boldsymbol{y}}^k = -\left(\bar{\boldsymbol{y}}^k - \boldsymbol{e}_y^k\right )^\top L \bar{\boldsymbol{y}}^k\\
    & \stackrel{*}{=} -\sum_{i=1}^N \bar{q}_i^k(t) + \sum_{i=1}^N\sum_{j=1}^N a_{ij} e_{y,i}^k\left ( \bar{y}_i^k(t)-\bar{y}_j^k(t)\right )\\
    &\leq -\sum_{i=1}^N \bar{q}_i^k(t) + \sum_{i=1}^N\sum_{j=1}^N a_{ij} \left ( \left \| e_{y,i}^k \right \|^2 + \frac{1}{4}\left \| \bar{y}_i^k(t)-\bar{y}_j^k(t)\right \|^2\right )\\
    & \stackrel{**}{=} -\frac{1}{2}\sum_{i=1}^N \bar{q}_i^k(t) + \sum_{i=1}^N l_{ii} \left \| e_{y,i}^k \right \|^2,
\end{aligned}
\end{equation}
where the equalities $\stackrel{*}{=}$ and $\stackrel{**}{=}$ hold due to 
\begin{equation}\label{sumq}
    \sum_{i=1}^N \bar{q}_i^k(t) = \frac{1}{2} \sum_{i=1}^N\sum_{j=1}^N a_{ij}\left \| \bar{y}_i^k(t)-\bar{y}_j^k(t)\right \|^2 = \left(\bar{\boldsymbol{y}}^k\right )^\top L \bar{\boldsymbol{y}}^k.
\end{equation}

In terms of the term $-(\Tilde{\boldsymbol{y}}^k)^\top   \Tilde{\boldsymbol{z}}^k$, we have $q_1^\top  \Tilde{\boldsymbol{z}}^k = 0$ due to $\sum_{i=1}^{N} z_i^k(t)=0$. Thus, according to Lemma \ref{orthogonaltransformation},
\begin{equation}
    -(\Tilde{\boldsymbol{y}}^k)^\top   \Tilde{\boldsymbol{z}}^k = - (\Tilde{\boldsymbol{y}}^k)^\top (QQ^\top )\Tilde{\boldsymbol{z}}^k = -(\Tilde{\boldsymbol{y}}^k)^\top (Q_2Q_2^\top )\Tilde{\boldsymbol{z}}^k,
\end{equation}
\begin{equation}
    (\Tilde{\boldsymbol{z}}^k)^\top \varGamma L\boldsymbol{e}_y^k = (\Tilde{\boldsymbol{z}}^k)^\top \left ( I - \frac{1}{N}\boldsymbol{1}_N\boldsymbol{1}_N^\top \right )\boldsymbol{e}_y^k 
    = (\Tilde{\boldsymbol{z}}^k)^\top \boldsymbol{e}_y^k.  
\end{equation}

The term $(\Tilde{\boldsymbol{z}}^k)^\top  \varGamma L\Tilde{\boldsymbol{y}}^k$ can be rewritten as 
\begin{equation}\label{substitute15}
\begin{aligned}
        (\Tilde{\boldsymbol{z}}^k)^\top  \varGamma L\Tilde{\boldsymbol{y}}^k
        & = (\Tilde{\boldsymbol{z}}^k)^\top  Q\begin{bmatrix}
            1& \\ 
            &\Lambda^{-1} 
            \end{bmatrix} Q^\top Q\begin{bmatrix}
            0& \\ 
            &\Lambda 
            \end{bmatrix} Q^\top \Tilde{\boldsymbol{y}}^k\\
        &= (\Tilde{\boldsymbol{z}}^k)^\top \left ( Q\begin{bmatrix}
 0& \\ 
 & I_{N-1}
\end{bmatrix}Q^\top \right )\Tilde{\boldsymbol{y}}^k\\
&= (\Tilde{\boldsymbol{z}}^k)^\top  (Q_2Q_2^\top )\Tilde{\boldsymbol{y}}^k.
\end{aligned}
\end{equation}

Using Young's inequality, we have
\begin{equation}\label{V1kyoung}
    (\Tilde{\boldsymbol{z}}^k)^\top \boldsymbol{e}_y^k \leq \frac{\varsigma_{\rm max}^k}{2}(\Tilde{\boldsymbol{z}}^k)^\top \Tilde{\boldsymbol{z}}^k +  \frac{1}{2\varsigma_{\rm max}^k}(\boldsymbol{e}_y^k)^\top \boldsymbol{e}_y^k.
\end{equation}

Substituting (\ref{substitute11})–(\ref{V1kyoung}) into (\ref{deV1}), one can obtain
\begin{equation}\label{deV1leq} 
    \begin{aligned}
    \dot{V}_1^k \leq &  T_1(t,t_{\rm pre 1})\left [-m^k_{\min} (\Tilde{\boldsymbol{x}}^k)^\top \Tilde{\boldsymbol{x}}^k  + \frac{\varsigma_{\rm max}^k}{2}(\Tilde{\boldsymbol{z}}^k)^\top \Tilde{\boldsymbol{z}}^k\right.\\  &\left. -\frac{1}{2}\sum_{i=1}^N \bar{q}_i^k(t) + \sum_{i=1}^N\left ( \frac{1}{2\varsigma_{\rm max}^k} + l_{ii}\right ) \left \| e_{y,i}^k \right \|^2\right ].
\end{aligned}
\end{equation}

The time derivative of $V_2^k$ is 
\begin{equation}
    \dot{V}_2^k  =   2\varsigma_{\rm max}^k T_1(t,t_{\rm pre 1})\left (\Tilde{\boldsymbol{y}}^k+\Tilde{\boldsymbol{z}}^k \right )^\top \left (-\Tilde{\boldsymbol{z}}^k -\Tilde{\boldsymbol{x}}^k \right ).
\end{equation}

According to Young's inequality, we have $ - (\Tilde{\boldsymbol{y}}^k)^\top \Tilde{\boldsymbol{z}}^k \leq \frac{1}{4} (\Tilde{\boldsymbol{z}}^k)^\top \Tilde{\boldsymbol{z}}^k+(\Tilde{\boldsymbol{y}}^k)^\top \Tilde{\boldsymbol{y}}^k$, $- (\Tilde{\boldsymbol{z}}^k)^\top \Tilde{\boldsymbol{x}}^k \leq \frac{1}{4} (\Tilde{\boldsymbol{z}}^k)^\top \Tilde{\boldsymbol{z}}^k+(\Tilde{\boldsymbol{x}}^k)^\top \Tilde{\boldsymbol{x}}^k$, $- (\Tilde{\boldsymbol{y}}^k)^\top \Tilde{\boldsymbol{x}}^k \leq \frac{1}{2} (\Tilde{\boldsymbol{y}}^k)^\top \Tilde{\boldsymbol{y}}^k+ \frac{1}{2} (\Tilde{\boldsymbol{x}}^k)^\top \Tilde{\boldsymbol{x}}^k$. Then we have 
\begin{equation}\label{deV2leq}
    \dot{V}_2^k  \leq   \varsigma_{\rm max}^k T_1(t,t_{\rm pre 1})\left [3(\Tilde{\boldsymbol{x}}^k)^\top \Tilde{\boldsymbol{x}}^k+3(\Tilde{\boldsymbol{y}}^k)^\top \Tilde{\boldsymbol{y}}^k-(\Tilde{\boldsymbol{z}}^k)^\top \Tilde{\boldsymbol{z}}^k \right ].
\end{equation}

Calculating the time derivative of $V^k_3(t)$ yields
\begin{equation}\label{deV3leq}
\begin{aligned}
    \dot{V}^k_3(t) 
    & \leq T_1(t,t_{\rm pre 1})\sum_{i=1}^N \left (-\phi_i^k\eta_i^k(t) \right.\\ &\quad \left.- \delta_i^k ( \frac{1}{2\varsigma_{\rm max}^k}+l_{ii}) \left \|e_{y,i}^k(t) \right \|^2 + \delta_i^k\frac{\beta_i^k}{2}\bar{q}_i^k(t)\right ).
\end{aligned}  
\end{equation}

Thus, adding the (\ref{deV1leq}), (\ref{deV2leq}) and (\ref{deV3leq}) yields
\begin{equation}\label{dVmid}
\begin{aligned}
    \dot{V}^k \leq &  T_1(t,t_{\rm pre 1})\left [ -(m^k_{\min}-3\varsigma_{\rm max}^k)(\Tilde{\boldsymbol{x}}^k)^\top \Tilde{\boldsymbol{x}}^k  \right.\\ 
        & \left.+ 3\varsigma_{\rm max}^k (\Tilde{\boldsymbol{y}}^k)^\top  \Tilde{\boldsymbol{y}}^k  - \frac{\varsigma_{\rm max}^k}{2}(\Tilde{\boldsymbol{z}}^k)^\top  \Tilde{\boldsymbol{z}}^k - \sum_{i=1}^N \phi_i^k\eta_i^k(t) \right.\\ 
        & \left.+ \sum_{i=1}^N (1-\delta_i^k) (\frac{1}{2\varsigma_{\rm max}^k}+l_{ii})\left \|e_{y,i}^k \right \|^2 \right.\\ 
        & \left.+ \frac{1}{2}\sum_{i=1}^N \left ( \delta_i^k \beta_i^k - 1\right )\bar{q}_i^k(t) \right ].
\end{aligned}
\end{equation}

Let $\varDelta  = - \sum_{i=1}^N \phi_i^k\eta_i^k(t) + \sum_{i=1}^N (1-\delta_i^k) (\frac{1}{2\varsigma_{\rm max}^k}+l_{ii})\left \|e_{y,i}^k \right \|^2 + \frac{1}{2}\sum_{i=1}^N \left ( \delta_i^k \beta_i^k - 1\right )\bar{q}_i^k(t)$. From (\ref{thresholdETM1}) and (\ref{sumq}), defining $\beta_{\max}^k = \max_{i \in \mathcal{V}} \beta_i^k $, we have 
\begin{equation}
    \varDelta 
    \leq -\sum_{i=1}^N \left ( \phi_i^k-\frac{1-\delta_i^k}{\alpha_i^k}\right )\eta_i^k(t) -\frac{1}{2}\left (1- \beta_{\max}^k)\right (\boldsymbol{\bar{y}}^k)^\top L\boldsymbol{\bar{y}}^k.
\end{equation}

From equation (20) in \cite{duDOdynamic}, we get 
\begin{equation}
    \begin{aligned}
        (\boldsymbol{y}^k)^\top L \boldsymbol{y}^k & \leq \psi_y^k (\boldsymbol{\bar{y}}^k)^\top L(\boldsymbol{\bar{y}}^k ) \\
        &\quad + \frac{2\lambda_N(L)}{\min_{i\in\mathcal{V}}\left \{\alpha_i^k(\frac{1}{2\varsigma_{\rm max}^k}+l_{ii}) \right \}}\sum_{i=1}^N \eta_i^k(t),
    \end{aligned}
\end{equation}
where $\psi_d^k = \min_{i\in\mathcal{V}}\left \{ \phi_i^k-\frac{1-\delta_i^k}{\alpha_i^k}\right \} $ and $\psi_y^k = \max \left \{2 +  \frac{\lambda_N(L)}{\min_{i\in\mathcal{V}}\left \{\frac{1}{2\varsigma_{\rm max}^k}+l_{ii} \right \}},\frac{2\lambda_N(L)(1-\beta^k_{\max})}{\psi_d^k\min_{i\in\mathcal{V}}\left \{\alpha_i^k(\frac{1}{2\varsigma_{\rm max}^k}+l_{ii}) \right \}} \right \}$.

Thus, together with (\ref{thresholdETM1}) and (\ref{EQforsubproblemCompact}c), we have $-\frac{1}{2}\left (1- \beta_{\max}^k)\right (\boldsymbol{\bar{y}}^k)^\top L\boldsymbol{\bar{y}}^k \leq -\frac{1}{2\psi_y^k}\left (1- \beta_{\max}^k)\right (\boldsymbol{\Tilde{y}}^k)^\top L \boldsymbol{\Tilde{y}}^k + \frac{\psi_d^k}{2} \sum_{i=1}^N \eta_i^k(t)$, which implies that 
\begin{equation}\label{deltafinal}
    \varDelta \leq - \frac{\psi_d^k}{2} \sum_{i=1}^N \eta_i^k(t)-\frac{1}{2\psi_y^k}\left (1- \beta_{\max}^k)\right (\boldsymbol{\Tilde{y}}^k)^\top L \boldsymbol{\Tilde{y}}^k.
\end{equation}

Substituting (\ref{deltafinal}) into (\ref{dVmid}) leads to 
\begin{equation}\label{ref66}
\begin{aligned}
    \dot{V}^k \leq &  T_1(t,t_{\rm pre 1})\left [ -(m^k_{\min}-3\varsigma_{\rm max}^k)(\Tilde{\boldsymbol{x}}^k)^\top \Tilde{\boldsymbol{x}}^k  \right.\\ 
        & \left.- \left (\frac{\lambda_2(L)(1- \beta_{\max}^k)}{2\psi_y^k} - 3\varsigma_{\rm max}^k\right )  (\Tilde{\boldsymbol{y}}^k)^\top  \Tilde{\boldsymbol{y}}^k \right.\\ 
        & \left. -\frac{\varsigma_{\rm max}^k}{2}(\Tilde{\boldsymbol{z}}^k)^\top  \Tilde{\boldsymbol{z}}^k - \frac{\psi_d^k}{2} \sum_{i=1}^N \eta_i^k(t) \right ].
\end{aligned}
\end{equation}

Let $0< \varsigma_i^k < \min\left \{ \frac{m^k_{\min}}{3}, \frac{\lambda_2(L)(1- \beta_{\max}^k)}{6\psi_y^k}\right \}$ for all $i \in \mathcal{V}$, together with (\ref{subprobleminequality}) we have 
\begin{equation}\label{deVleqETM}
    \dot{V}^k \leq - \frac{\kappa_1}{\theta_2^k}T_1(t,t_{\rm pre 1})V^k,
\end{equation}
where $\kappa_1 = \min \left \{m^k_{\min}-3\varsigma_{\rm max}^k,\frac{\lambda_2(L)(1- \beta_{\max}^k)}{2\psi_y^k} - 3\varsigma_{\rm max}^k\right.$, $\left. \frac{\varsigma_{\rm max}^k}{2}, \frac{\psi_d^k}{2}  \right \}$.

Recalling Lemma \ref{prescribedcondition} and (\ref{subprobleminequality}), with $T_1(t,t_{\rm pre1})$ defined in (\ref{definitionTBG}), the algorithm (\ref{solutionforsubproblem}) achieves prescribed-time approximate convergence at $t_{\rm pre1}$. In addition, the error is $\epsilon_1=\sqrt{e^{-\frac{\kappa_1}{\theta_2^k} \left ( \gamma_1 (t_{\rm pre 1+})-\gamma_1(0)\right )}\frac{V^k\left ( 0\right )}{\theta_2^k}}$.

Next, we prove that the Zeno behavior is excluded based on the contradiction method. For $k$th objective, assume the Zeno behavior occurs at $T_0^k$, namely, $\lim_{\ell \rightarrow \infty}t_{i,k}^{\ell} = T_0^k>0, \exists i\in \mathcal{V}$. From (\ref{deVleqETM}), we obtain that there exists a positive upper bound $M_0^k>0$ such that $\|\dot{e}_{y,i}^k(t)\|=\|\dot{y}_i^k(t)\| \leq M_0^k, \forall t\geq 0$. Let $\varepsilon_0^k = \frac{1}{2M_0^k}\sqrt{\frac{\eta_i^k(0)}{\alpha_i^k(\frac{1}{2\varsigma_i^k}+l_{ii})}} e^{ -\frac{1}{2}(\phi_i^k + \delta_i^k/\alpha_i^k) \gamma(t,\sigma) } >0$, then there exists a positive integer $N(\varepsilon_0^k)$ such that 
\begin{equation}\label{Zeno1}
    t_{i,k}^{\ell} \in \left [ T_0^k-\varepsilon_0^k,T_0^k \right ],\forall \ell>N(\varepsilon_0^k).
\end{equation}
Noting that $\bar{q}_i^k>0$, together with (\ref{etageq0}), a necessary condition to guarantee (\ref{thresholdETM1}) is $\left \| e_{y,i}^k(t) \right \| \geq \sqrt{\frac{\eta_i^k(0)}{\alpha_i^k(\frac{1}{2\varsigma_i^k}+l_{ii})}} e^{ -\frac{1}{2}(\phi_i^k + \delta_i^k/\alpha_i^k) \gamma(t,\sigma) }>0$. Invoking $\|\dot{e}_{y,i}^k(t)\| \leq M_0^k, \forall t\geq 0$, together with $e_{y,i}^k\left(t_{i,k}^{N(\varepsilon_0^k)}\right)=0$, we have 
\begin{equation}
    \|e_{y,i}^k\left(t_{i,k}^{N(\varepsilon_0^k)+1}\right)\| \leq \left(t_{i,k}^{N(\varepsilon_0^k)+1}- t_{i,k}^{N(\varepsilon_0^k)}\right) M_0^k.
\end{equation}
Then, 
\begin{equation}
\begin{aligned}
    & \quad t_{i,k}^{N(\varepsilon_0^k)+1}- t_{i,k}^{N(\varepsilon_0^k)}\\
    & \geq \frac{1}{M_0}\sqrt{\frac{\eta_i^k(0)}{\alpha_i^k(\frac{1}{2\varsigma_i^k}+l_{ii})}} e^{ -\frac{1}{2}(\phi_i^k + \delta_i^k/\alpha_i^k) \gamma(t,\sigma) } = 2\varepsilon_0^k,
\end{aligned} 
\end{equation}
which contradicts to (\ref{Zeno1}). Therefore, the Zeno behavior is circumvented. The proof is completed.
\end{IEEEproof}

Then, the following lemma presents the convergence of the ideal point seeking module \eqref{idealpoint}.

\begin{lemma}\label{lemmaideal}
    Under Assumption \ref{assumptionMultiConvex}, with the algorithm (\ref{idealpoint}), $\hat{x}_i^k$ converges to the ideal solution $\hat{x}_i^{k*}$ at a prescribed time $t_{\rm pre 2}$ for all $k\in \mathcal{K},i\in\mathcal{V}$, and the error is bounded by
    \begin{equation}\label{CE2}
        \left\{\begin{array}{@{}l}
            \lim_{t\rightarrow t_{\rm pre 2+}}\left \|\hat{\boldsymbol{x}}^k(t)-\hat{\boldsymbol{x}}^{k*} \right \|\\
            \quad \quad \quad \quad \quad \quad \quad \leq \sqrt{e^{-2m^k_{\min} \left ( \gamma_2 (t_{\rm pre 2+})-\gamma_2(0)\right )}2\hat{V}^k\left ( 0\right )},\\ 
            \left \|\hat{\boldsymbol{x}}^k(t)-\hat{\boldsymbol{x}}^{k*} \right \|\leq \sqrt{e^{-2m^k_{\min} \left ( \gamma_2 (t_{\rm pre 2+})-\gamma_2(0)\right )}2\hat{V}^k\left ( 0\right )},\\ 
            \lim_{t\rightarrow \infty}\left \|\hat{\boldsymbol{x}}^k(t)-\hat{\boldsymbol{x}}^{k*} \right \|=0,
        \end{array}\right.
    \end{equation}
     where $\hat{V}^k(0)$ is the initial value of the Lyapunov function $\hat{V}^k(t) = \frac{1}{2}(\check{\boldsymbol{x}}^k)^\top \check{\boldsymbol{x}}^k$.
\end{lemma}
\begin{IEEEproof}
    Define $\check{\boldsymbol{x}}^k = \hat{\boldsymbol{x}}^k-\hat{\boldsymbol{x}}^{k*}$ and the Lyapunov candidate function $ \hat{V}^k(t) = \frac{1}{2}(\check{\boldsymbol{x}}^k)^\top \check{\boldsymbol{x}}^k$. Taking the time derivative, from (\ref{Dealwithprojection})-(\ref{34}), we have:
    \begin{equation}
    \begin{aligned}
        \dot{\hat{V}}^k &= T_2(t,t_{\rm pre 2})(\check{\boldsymbol{x}}^k)^\top \left (\hat{\boldsymbol{h}}^k -C_{\boldsymbol{\Omega}}(\hat{\boldsymbol{x}}^k)+C_{\boldsymbol{\Omega}}(\hat{\boldsymbol{x}}^{k*}) \right)\\
        &\leq 
        -2m^k_{\min}T_2(t,t_{\rm pre 2})\hat{V}^k(t),
    \end{aligned}   
    \end{equation}
    where $\hat{\boldsymbol{h}}^k = \nabla f^k\left (\hat{\boldsymbol{x}}^k\right )-\nabla f^k\left (\hat{\boldsymbol{x}}^{k*}\right )$. Therefore, $\epsilon_2=\sqrt{e^{-2m^k_{\min} \left ( \gamma_2 (t_{\rm pre 2+})-\gamma_2(0)\right )}2\hat{V}^k\left ( 0\right )}$
\end{IEEEproof}

Finally, letting $\boldsymbol{\nu} = \left [ \nu_1,\dots,\nu_N\right ]^\top \in \mathbb{R}^N$ and
$\boldsymbol{\mu} = \left [ \mu_1,\dots,\mu_N\right ]^\top \in \mathbb{R}^N$, the following theorem demonstrates the overall convergence of the proposed algorithm (\ref{solutionforsubproblem})-(\ref{dynamicETM2}) for the DCMRAP (\ref{problemMultiobjectiveresourceallocation}).

\begin{lemma}[Optimality]\label{lemmaOptimality}
    Under Assumptions \ref{MultiAssumptionGraph}, \ref{assumptionMultiSlater} and \ref{assumptionconvexu}, for any bounded initial points $x_i(0)\in \Omega_i, \forall i \in \mathcal{V}$, if $\left (\boldsymbol{x}^{*}, \boldsymbol{\nu}^{*}, \boldsymbol{\mu}^{*}\right )$ is the equilibrium of (\ref{solutionforproblem})
    , then $\boldsymbol{x}^{*}$ is the optimal solution of the problem (\ref{problemREMultiobjectiveresourceallocation}).
\end{lemma}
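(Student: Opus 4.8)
The plan is to mirror the argument of Lemma~\ref{lemmaSubOptimality} almost verbatim, with the single-objective preference index $u_i(x_i,\hat{x}_i^{*},\omega_i^{*})$ playing the role that $f_i^k$ did there. The ingredient that makes the template transfer is that $u_i$ is strongly convex (Lemma~\ref{lemmaconvexu} and Assumption~\ref{assumptionconvexu}), so problem~(\ref{problemREMultiobjectiveresourceallocation}) is a convex program with an affine coupling constraint and convex compact local sets; under Slater's condition (Assumption~\ref{assumptionMultiSlater}) the KKT conditions are both necessary and sufficient for global optimality. At the given equilibrium the triggering errors vanish, so $\bar{\nu}_i^{*}=\nu_i^{*}$, and since $T_3(t,t_{\rm pre3})>0$ is a positive scalar it factors out of every equation; setting the right-hand sides of~(\ref{solutionforproblem}) to zero therefore yields the clean equilibrium system
\begin{subequations}
    \begin{align}
        \mathbf{0}_N &= P_{T_{\boldsymbol{\Omega}}(\boldsymbol{x}^{*})}\left( \boldsymbol{\nu}^{*}-\nabla U(\boldsymbol{x}^{*},\hat{\boldsymbol{x}}^{*},\boldsymbol{\omega}^{*})\right),\\
        \mathbf{0}_N &= -L\boldsymbol{\nu}^{*}-\boldsymbol{\mu}^{*}+\boldsymbol{d}-\boldsymbol{x}^{*},\\
        \mathbf{0}_N &= L\boldsymbol{\nu}^{*},
    \end{align}
\end{subequations}
where $\nabla U$ collects the componentwise gradients $\nabla u_i(x_i^{*},\hat{x}_i^{*},\omega_i^{*})$.

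First I would establish feasibility of the local constraints. Because $\dot{x}_i=P_{T_{\Omega_i}(x_i)}(\cdot)\in T_{\Omega_i}(x_i)$ for all $t\ge 0$, the viability theorem of Aubin~\cite{Aubin1984differentialinclusions} guarantees $x_i(t)\in\Omega_i$ for every bounded $x_i(0)\in\Omega_i$, hence $x_i^{*}\in\Omega_i$. Next, applying property~iii) of Lemma~\ref{propertiesDifferentialprojection} to the first equilibrium equation gives the stationarity inclusion $0\in\nu_i^{*}-\nabla u_i(x_i^{*},\hat{x}_i^{*},\omega_i^{*})+N_{\Omega_i}(x_i^{*})$ for each $i$. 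From the third equation together with Assumption~\ref{MultiAssumptionGraph} (connectivity) and the fact that $\ker L=\mathrm{span}\{\mathbf{1}_N\}$, I obtain the consensus $\nu_1^{*}=\cdots=\nu_N^{*}=:\nu^{*}$, so that this common value plays the role of the single Lagrange multiplier for the global equality constraint.

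Finally I would recover the coupling constraint. Since $\mu_i(0)=0$ and $\sum_{i}\dot{\mu}_i=T_3(t,t_{\rm pre3})\,\mathbf{1}_N^{T}L\bar{\boldsymbol{\nu}}=0$ (using $\mathbf{1}_N^{T}L=\mathbf{0}_N^{T}$), the quantity $\sum_{i}\mu_i(t)\equiv 0$ is conserved, so $\sum_i\mu_i^{*}=0$. Left-multiplying the second equilibrium equation by $\mathbf{1}_N^{T}$ and again invoking $\mathbf{1}_N^{T}L=\mathbf{0}_N^{T}$ then yields $\sum_{i=1}^{N}x_i^{*}=\sum_{i=1}^{N}d_i=D$. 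Collecting the stationarity inclusions, the consensus multiplier $\nu^{*}$, and the primal feasibility conditions $x_i^{*}\in\Omega_i$ and $\sum_i x_i^{*}=D$ shows that $(\boldsymbol{x}^{*},\nu^{*})$ satisfies the KKT system of~(\ref{problemREMultiobjectiveresourceallocation}); convexity together with Slater's condition then upgrades this to global optimality of $\boldsymbol{x}^{*}$.

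I expect the only genuine obstacle to be conceptual rather than computational: one must justify freezing the externally supplied estimates $\hat{x}_i$ and $\omega_i$ at their converged values $\hat{x}_i^{*},\omega_i^{*}$ when characterizing the equilibrium of~(\ref{solutionforproblem}). This is legitimate because Lemmas~\ref{lemmasub} and~\ref{lemmaideal} guarantee $\omega_i\to\omega_i^{*}$ and $\hat{x}_i\to\hat{x}_i^{*}$ within the finite prescribed times $t_{\rm pre1}$ and $t_{\rm pre2}$, after which $u_i$ has fixed ideal points and weights; the remaining verification that the KKT conditions are \emph{sufficient} rests entirely on the strong convexity of $u_i$ established in Lemma~\ref{lemmaconvexu}.
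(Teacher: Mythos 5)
Your proposal is correct and takes essentially the same route as the paper: the paper omits this proof entirely, stating only that it is ``similar to the proof of Lemma~\ref{lemmaSubOptimality},'' and your argument is exactly that template transferred, with $u_i$, $\nu_i$, $\mu_i$ playing the roles of $f_i^k$, $y_i^k$, $z_i^k$ (viability for $x_i^*\in\Omega_i$, the normal-cone stationarity inclusion from the projection, consensus of $\nu_i^*$ via $\ker L=\mathrm{span}\{\mathbf{1}_N\}$, conservation of $\sum_i\mu_i=0$, and left-multiplication by $\mathbf{1}_N^T$ to recover $\sum_i x_i^*=D$, then KKT sufficiency under convexity and Slater). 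Your closing remark about freezing $\hat{x}_i$ and $\omega_i$ at their converged values is a legitimate point the paper glosses over, but it does not change the substance of the argument.
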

\begin{IEEEproof}
The proof is similar to the proof of Lemma \ref{lemmaSubOptimality} and is therefore omitted.
\end{IEEEproof}
\begin{theorem}\label{theoremtpre2}
    Under Assumptions \ref{MultiAssumptionGraph}, \ref{assumptionMultiSlater} and \ref{assumptionconvexu}, with Lemmas \ref{lemmasub} and \ref{lemmaideal}, the proposed algorithm in (\ref{solutionforproblem}) with the dynamic ETM (\ref{thresholdETM2})-(\ref{dynamicETM2}) solves the DCMRAP (\ref{problemMultiobjectiveresourceallocation}) in a prescribed time $t_{\rm pre3}>\max\left \{t_{\rm pre 1},t_{\rm pre 2}\right\}$, and the Zeno behavior is excluded, and the convergence error is bounded by
    \begin{equation}\label{CE3}
        \left\{\begin{array}{@{}l}
            \lim_{t\rightarrow t_{\rm pre 3+}}\left \|\boldsymbol{x}(t)-\boldsymbol{x}^{*} \right \|\leq \sqrt{e^{-\frac{\kappa_2}{\vartheta_2} \left ( \gamma_3 (t_{\rm pre 3+})-\gamma_3(0)\right )}\frac{V\left ( 0\right )}{\vartheta_2^k}},\\ 
            \left \|\boldsymbol{x}(t)-\boldsymbol{x}^{*} \right \|\leq \sqrt{e^{-\frac{\kappa_2}{\vartheta_2} \left ( \gamma_3 (t_{\rm pre 3+})-\gamma_3(0)\right )}\frac{V\left ( 0\right )}{\vartheta_2}}, \forall t>t_{\rm pre 3},\\ 
            \lim_{t\rightarrow +\infty}\left \|\boldsymbol{x}(t)-\boldsymbol{x}^{*} \right \|=0,
        \end{array}\right.
    \end{equation}
    where $\kappa_2 = \min \left \{\varpi_{\min}-3\varsigma_{\rm max},\frac{\lambda_2(L)(1- \beta_{\max})}{2\psi_\nu} - 3\varsigma_{\rm max}\right.$, $\left. \frac{\varsigma_{\rm max}}{2}, \frac{\psi_d}{2}\right \}$, $\psi_d = \min_{i\in\mathcal{V}}\left \{ \phi_i-\frac{1-\delta_i}{\alpha_i}\right \} $, $\vartheta_2 = \max \left \{1,\frac{1}{2}+4\varsigma_{\rm max}, \frac{1}{2\lambda_2(L)}+4\varsigma_{\rm max}  \right \}$, $\psi_\nu = \max \left \{2 +  \frac{\lambda_N(L)}{\min_{i\in\mathcal{V}}\left \{\frac{1}{2\varsigma_{\rm max}}+l_{ii} \right \}},  \frac{2\lambda_N(L)(1-\beta_{\max})}{\psi_d\min_{i\in\mathcal{V}}\left \{\alpha_i(\frac{1}{2\varsigma_{\rm max}}+l_{ii}) \right \}} \right \}$, $\varsigma_{\rm max} = \max_{i \in \mathcal{V}} \varsigma_i$ with $\varsigma_i < \min\left \{ \frac{\varpi_{\min}}{3}, \frac{\lambda_2(L)(1- \beta_{\max})}{6\psi_\nu}\right \}$, $\varpi_{\min} = \min_{i\in\mathcal{V}}{\varpi_i}$ and $\beta_{\max} = \max_{i \in \mathcal{V}} \beta_i$. $V(0)$ is the initial value of the Lyapunov function defined in (\ref{LyapunovRE}).  
\end{theorem}
\begin{IEEEproof}
    To prove Theorem \ref{theoremtpre2}, note that the initial time of the TBG $T_3(t,t_{\rm pre 3})$ is zero, hence the key is to prove that the convergence of (\ref{solutionforproblem}) for the problem (\ref{problemREMultiobjectiveresourceallocation}) is achieved at $t_{\rm pre 3}$ when $t>\max\left \{t_{\rm pre 1},t_{\rm pre 2}\right\}$. Define $\Tilde{\boldsymbol{x}} = \boldsymbol{x}-\boldsymbol{x}^*$, $\Tilde{\boldsymbol{\mu}} = \boldsymbol{\mu}-\boldsymbol{\mu}^*$ and $\Tilde{\boldsymbol{\nu}} = \boldsymbol{\nu}-\boldsymbol{\nu}^*$. Consider the following Lyapunov function
\begin{equation}\label{LyapunovRE}
   V(t) = V_1(t) + V_2(t) + V_3(t),
\end{equation}
where $V_1(t) = \frac{1}{2}\left ((\Tilde{\boldsymbol{x}})^\top \Tilde{\boldsymbol{x}}+ (\Tilde{\boldsymbol{\nu}})^\top \Tilde{\boldsymbol{\nu}}+(\Tilde{\boldsymbol{\mu}})^\top  \varGamma \Tilde{\boldsymbol{\mu}}\right ), V_2(t)  =   2\varsigma_i\left (\Tilde{\boldsymbol{\nu}}+\Tilde{\boldsymbol{\mu}} \right )^\top \left (\Tilde{\boldsymbol{\nu}}+\Tilde{\boldsymbol{\mu}} \right ), V_3(t)  = \sum_{i=1}^N \eta_i(t)$, the convergence of the system can be proved similarly as in the proof of Lemma \ref{lemmasub}, so the detailed procedure is omitted.
\end{IEEEproof}

\begin{remark}
Under Assumptions \ref{assumptionMultiConvex} and \ref{assumptionconvexu}, strong convexity underlies inequalities \eqref{subprobleminequality} and \eqref{deVleqETM}, which Lemma \ref{prescribedcondition} uses to guarantee prescribed‐time convergence with explicit error bounds \eqref{CE1}, \eqref{CE2} and \eqref{CE3}. Without strong convexity, these bounds collapse and finite-time convergence cannot be ensured. Although the ETMs may remain operable—since the triggering error term in \eqref{deltafinal} stays non‐positive—their performance would be re‐established. Nonetheless, the TBGs still accelerate transient behavior, and asymptotic optimality may be achieved as in first‐order methods \cite{yi2016initialization}.
\end{remark}

For real-time applications, Algorithm \ref{alg:alg1} encapsulates the overarching structure of the proposed algorithm tailored for each agent. Three key steps are included in the process: initialization, preference formulation and compromise. During the initialization stage, all initial values for the algorithm parameters and the prescribed times are set. The preference index is then formulated online by executing \eqref{solutionforsubproblem}-\eqref{idealpoint}, which determines the optimal weightings and ideal points within a prescribed time. Finally, a compromised solution is obtained from (\ref{solutionforproblem})-(\ref{dynamicETM2}) based on the preference index. 

\begin{algorithm}[ht!]
\caption{Event-triggered Prescribed-Time Distributed Algorithm for Agent $i$}\label{alg:alg1}
\begin{algorithmic}
\STATE 
\STATE \textbf{Initialization:}
\STATE \hspace{0.3cm} For $i\in \mathcal{V}$, $k\in \mathcal{K}$, set
\STATE \hspace{0.7cm} $\bar{x}_i^k = \bar{x}_i^k(0), y_i^k=y_i^k(0), z_i^k = z_i^k(0)=0,$
\STATE \hspace{0.7cm} $\omega_i^k = \omega_i^k(0),\eta_i^k= \eta_i^k(0), x_i = x_i(0),$ 
\STATE \hspace{0.7cm} $\nu_i = \nu_i(0), \mu_i=\mu_i(0)=0,\eta_i= \eta_i(0).$
\STATE \hspace{0.3cm} choose $\phi_i^k, \delta_i^k, \alpha_i^k,\beta_i^k,\varsigma_i^k, \phi_i, \delta_i, \alpha_i ,\beta_i,\varsigma_i$, 
\STATE \hspace{0.3cm} choose the prescribed settling times, $t_{\rm pre 1}$, $t_{\rm pre 2}$ 
\STATE \hspace{0.7cm} and $t_{\rm pre 3}$, $ \max\left \{t_{\rm pre 1},t_{\rm pre 2} \right \}< t_{\rm pre 3}$.
\STATE \textbf{Preference formulation:}
\STATE \hspace{0.5cm} Run (\ref{solutionforsubproblem})-(\ref{idealpoint}).
\STATE \textbf{Compromise:}
\STATE \hspace{0.5cm} Run (\ref{solutionforproblem})-(\ref{dynamicETM2}).
\STATE \textbf{End if} convergence to the optimal solution is achieved.
\end{algorithmic}
\label{alg1}
\end{algorithm}

\begin{remark}
   Unlike previous studies on distributed multiobjective optimization \cite{li2021multiobjective, liu2023multiobjective}, which only address global equality constraints, the proposed algorithm tackles CMRAPs with both global and local constraints. A tailored projection operator, $ P_{T_{\Omega_i}(x_i)} $, is introduced to handle local set constraints, integrating a generalized TBG to ensure prescribed-time convergence. Furthermore, novel formulations are provided for weighting coefficients (\ref{problemforkthobjective}) and ideal points (\ref{idealpoint}), accounting for the feasible decision variable set. Solving problem \eqref{problemREMultiobjectiveresourceallocation} produces a global optimal solution, but this solution may not be network‐level Pareto optimal in general.

\end{remark}

\begin{remark}\label{remark4}
The classical TBG in \cite{NING2019Practicalfixed, GUO2022distributeddynamic, yang2025preETM} is a specific case of the generalized TBGs proposed in \cite{liu2023multiobjective}. Generalized TBGs provide greater design flexibility by accommodating a wider range of functional formulas, which allows our algorithms to be tailored for improved solution accuracy or smoother convergence process, depending on the application requirements.
\end{remark}
\begin{remark}\label{remark5}
    Novel dynamic ETMs are introduced to enhance the practical applicability of our approach and reduce communication overhead. By incorporating the TBG $T_1(t,t_{\rm pre 1})$ (respectively, $T_3(t,t_{\rm pre 3})$) into the dynamic evolution law (\ref{dynamicETM}) (respectively, (\ref{dynamicETM2})), the proposed ETMs meet the stringent requirements of prescribed-time convergence. Designing and tuning the generalized TBGs to modify the dynamics of the internal variables $\eta_i(t)$ and $\eta_i^k(t)$ adjusts the performance of the ETMs. This strategy not only reduces unnecessary data transmission but also provides more flexibility for the design of communication performance without affecting the convergence speed and accuracy.
\end{remark}
\begin{remark}\label{remark6}
    Given that $\eta_i^k(t) > 0$ for all $t \geq 0$, the Zeno behavior is excluded in the dynamic ETM (\ref{thresholdETM1})--(\ref{dynamicETM}). The condition $0<\varsigma_i^k < \min\left \{ \frac{m^k_{\min}}{3}, \frac{\lambda_2(L)(1- \beta_{\max}^k)}{6\psi_y^k}\right \}$ is introduced to prove the stability via Lyapunov functions, although this may present a conservative bound. Notably, for smaller values of $m^k_{\min}$, $\varsigma_i^k$ invariably remains less than $\frac{m^k_{\min}}{3}$, rendering the complex parameter $\frac{\lambda_2(L)(1- \beta_{\max}^k)}{6\psi_y^k}$ non-essential. Furthermore, adjusting the parameters $\phi_i^k, \delta_i^k, \alpha_i^k$ and $\beta_i^k$ can achieve a higher $\frac{\lambda_2(L)(1- \beta_{\max}^k)}{6\psi_y^k}$, ensuring that $\frac{m^k_{\min}}{3} < \frac{\lambda_2(L)(1- \beta_{\max}^k)}{6\psi_y^k}$. The same approach applies to parameter selection for the dynamic ETM (\ref{thresholdETM2})--(\ref{dynamicETM2}). 
\end{remark}
    
    \begin{remark}\label{remark7}
    In scenarios where the parameter $\alpha_i$ tends towards infinity within the dynamic triggering law (\ref{dynamicETM2}), our proposed dynamic ETM (\ref{thresholdETM2})--(\ref{dynamicETM2}) defaults to the static ETM in \cite{gao2022ETMGT, zhao2019distributed, wu2020event, chai2024event}, as described below:
    \begin{equation}\label{thresholdETMstatic}
    \begin{split}
        t_{i}^{\ell+1} = \inf_{t\geq t_{i}^\ell}\Bigg\{t \mid    &\left(\frac{1}{2\varsigma_i}+l_{ii}\right)\left \|e_{\nu,i}(t) \right \|^2  - \frac{\beta_i}{2} \bar{q}_i(t)  \geq 0  \Bigg\}.
    \end{split}
    \end{equation}
    By comparison, the threshold in (\ref{thresholdETM2}) exceeds that in (\ref{thresholdETMstatic}), resulting in reduced communication cost. 

    With $\beta_i=0$, the dynamic ETM (\ref{thresholdETM2})--(\ref{dynamicETM2}) aligns with the method presented in \cite{GUO2022distributeddynamic}:
    
    \begin{equation}\label{thresholdETMdynamicguo}
        t_{i}^{\ell+1} = \inf_{t\geq t_{i}^\ell}\left\{t \mid  \alpha_i \left(\frac{1}{2\varsigma_i}+l_{ii}\right)\left \|e_{\nu,i}(t) \right \|^2 \geq \mathcal{H}_i(t)  \right\},
    \end{equation}
    where $\mathcal{H}_i(t) = \eta_i(t)$. $\eta_i(t)$ is updated by
    \begin{equation}\label{dynamicETMguo}
        \dot{\eta}_i(t) = T_1(t,t_{\rm pre 3})\left ( -\phi_i\eta_i(t) -\delta_i  (\frac{1}{2\varsigma_i}+l_{ii})\left \|e_{\nu,i}(t) \right \|^2 \right ).
    \end{equation}
    The introduction of a network-based disagreement term $\bar{q}_i(t)$ in (\ref{thresholdETM2}) establishes a higher threshold $\mathcal{H}_i(t)$ compared to (\ref{thresholdETMdynamicguo}), which lacks this term. Moreover, (\ref{dynamicETM2}) includes a dynamic feedback term dependent on both local error metrics and network disagreement levels, thereby offering a dynamically adjustable threshold that accounts for both individual performance and interaction effects across the network, potentially leading to more robust system behavior.
    
     In conclusion, the static ETM (\ref{thresholdETMstatic}) and the dynamic ETM (\ref{thresholdETMdynamicguo})--(\ref{dynamicETMguo}) are specific cases of the proposed dynamic ETM. These attributes equally apply to the ETM (\ref{thresholdETM1})-(\ref{dynamicETM}).
\end{remark}

\section{SIMULATION EXPERIMENTS}\label{section4}
\begin{figure}[b]
    \centering
    \includegraphics[width=0.7\linewidth]{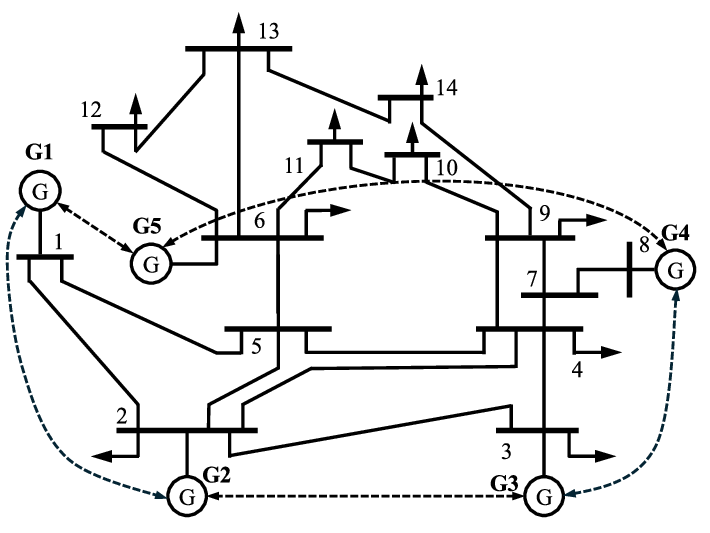}
    \caption{IEEE 14-bus system with communication network.}
    \label{fig:microgrids}
\end{figure}
\begin{table*}[ht!]
    \centering
    \caption{Parameters of the Generators.}
    \label{table:parameters}
    \begin{tabular}{ccccccccccccc}
      \toprule
      \multirow{2}{*}{No.} & \multicolumn{2}{c}{ Resources} & \multicolumn{2}{c}{Constraints} & \multicolumn{3}{c}{Economic objectives } & \multicolumn{3}{c}{Environment objectives } & \multicolumn{2}{c}{Technical objectives } \\
                                   & $P_{d,i}$ (MW)  & $\varrho_i$ & $P_{i,\mathrm{min}}$ (MW)  & $P_{i,\mathrm{max}}$ (MW)  & $a_i^{\rm eco}$ & $b_i^{\rm eco}$ & $c_i^{\rm eco}$ & $a_i^{\rm fuel}$ & $b_i^{\rm fuel}$ & $c_i^{\rm fuel}$ & $a_i^{\rm tec}$ & $P_i^{\rm opt}$ \\
	  \toprule
      1  & 120 & 0.023 & 100 & 140 & 0.086 & 3.482 & 3.481 & 0.175 & 1.266 & 0.666 & 1.000 & 90 \\
      2  & 150 & 0.054 & 125 & 170 & 0.093 & 4.688 & 4.263 & 0.165 & 1.665 & 3.171 & 1.088 & 120 \\
      3  & 114 & 0.032 & 110 & 165 & 0.072 & 2.533 & 3.500 & 0.117 & 1.359 & 1.308 & 1.336 & 135 \\
      4  & 150 & 0.048 & 120 & 150 & 0.080 & 2.300 & 6.578 & 0.120 & 1.323 & 2.973 & 0.788 & 90 \\
      5  & 186 & 0.050 & 100 & 200 & 0.098 & 4.210 & 4.810 & 0.206 & 1.937 & 1.487 & 1.220 & 90 \\
      \bottomrule
    \end{tabular}
  \end{table*}   
Following \cite{Liu2021IEEE14, Xu2024Power, Liu2021LiNing,zhao2019distributed}, the proposed algorithm is validated on an IEEE-14 bus system with five generators using the $L_2$ preference index. The communication topology is represented by a circle, as shown in Fig. \ref{fig:microgrids}. The objectives for the $i$th generator are formulated as follows:
\begin{equation}
    \begin{aligned}
        \min &\ \left \{ f_i^{\rm eco}(P_i),f_i^{\rm env}(P_i),f_i^{\rm tec}(P_i)\right \}, \\
        {\rm s.t.}&\  \sum_{i=1}^{N}P_i=\sum_{i=1}^{N}(1+\varrho_i)P_{d,i},\\
        & \ P_{i,\mathrm{min}}\leq P_i \leq P_{i,\mathrm{max}},\\
    \end{aligned}
\end{equation}
where $P_i \in \mathbb{R}$ is the power output of the $i$th generator; $f_i^{\rm eco}(x_i) = a_i^{\rm eco}P_i^2 + b_i^{\rm eco}P_i + c_i^{\rm eco}$, $f_i^{\rm env}(x_i) =  r_t \left (a_i^{\rm fuel}P_i^2 + b_i^{\rm fuel}P_i + c_i^{\rm fuel} \right )$ and $f_i^{\rm tec}(x_i) =  a_i^{\rm tec} \left (P_i^2 - P_i^{\rm opt} \right )^2$ denote the economic, environmental, and technical objectives, respectively; $ P_{i,\mathrm{min}}$ and $P_{i,\mathrm{max}}$ are the lower and upper bounds only known to $i$th generator; $P_D=\sum_{i=1}^{N}P_{d,i}$ is the total demand ($P_{d,i} \in \mathbb{R}$ is the local resource data). The simulation parameters are displayed in TABLE \ref{table:parameters}, taken from the practical application in \cite{zhao2019distributed, li2021multiobjective} with modifications, and $r_t = 0.2$. The initial values of the power outputs are $\left [115; 150; 115; 145; 115\right ]$. The optimal power outputs of generators are $P_1^* = 137.830, P_2^* = 167.485, P_3^* = 165, P_4^* = 147.673, P_5^* =133.021$. This DCMRAP setup readily extends to more realistic scenarios (e.g., \cite{Liu2021LiNing, LI2022Expert, LIU2023Energy}). Without loss of generality, the following TBGs are used later: 
\begin{enumerate}
    \item TBG 1: $T(t,t_{\rm pre }) = \frac{\mathrm{d}}{\mathrm{d}t}\gamma(t)$ with
                \begin{equation}\label{TBG1}
                \gamma(t) =  \begin{cases} {\begin{array}{l} 
                \begin{aligned}
                    & 12t^2, & 0\leq t< t_{\rm pre}, \\ 
                    & t, & t \geq t_{\rm pre}.
                \end{aligned}
                \end{array}} \end{cases}
                \end{equation}
    \item TBG 2: $T(t,t_{\rm pre }) = \frac{\mathrm{d}}{\mathrm{d}t}\gamma(t)$ with
                \begin{equation}\label{TBG2}
                    \gamma(t) =  \begin{cases} {\begin{array}{l} 
                    \begin{aligned}
                        & 30t, & 0\leq t< t_{\rm pre}, \\ 
                        & t, & t \geq t_{\rm pre}.
                    \end{aligned}
                \end{array}} \end{cases}
                \end{equation}
    \item TBG 3 (\cite{NING2019Practicalfixed, GUO2022distributeddynamic, zhang_predefined-time_2023}): $T(t,t_{\rm pre }) =1+\frac{\dot{b}(t,t_{\rm pre})}{1-b(t,t_{\rm pre})+10^{-7}}$ with
                \begin{equation}\label{TBG3}
                \begin{aligned}
                    b(t,t_{\rm pre }) = \begin{cases} {\begin{array}{l} 
                        \begin{aligned}
                            & \frac{10}{t_{\rm pre }^6}t^6-\frac{24}{t_{\rm pre }^5}t^5+\frac{15}{t_{\rm pre }^4}t^4, &0\leq t< t_{\rm pre}, \\ 
                            & 1, &t \geq t_{\rm pre}.
                        \end{aligned}
                    \end{array}} \end{cases}
                \end{aligned}  
                \end{equation}
\end{enumerate}

\subsection{Basic Performance Test}\label{sectionA}
 To verify the efficacy of the proposed algorithm detailed in equations (\ref{solutionforsubproblem})-(\ref{dynamicETM2}), TBG 1 (\ref{TBG1}) is utilized. The parameters of the dynamic ETMs are chosen as $ \alpha_i^k = 10, \phi_i^k=0.1, \delta_i^k = 0.9,\beta_i^k = 0.1, \eta_i^k(0) = 500, \varsigma_i^{\rm eco} = 0.048, \varsigma_i^{\rm fuel} = 0.0551, \varsigma_i^{\rm tec} = 0.0551, \alpha_i = 10, \phi_i=0.05, \delta_i = 1, \beta_i = 0.1, \eta_i(0) = 800, \varsigma_i = 0.048$. The prescribed settling times are set as $t_{\rm pre 1} =t_{\rm pre 2}= 2$ and $t_{\rm pre 3} = 3$.
 
 

The simulation results are given in Fig. \ref{XEvent}(a), TABLE \ref{table:communications} and Case 1 in TABLE \ref{table:cases}. The power outputs, depicted in the upper section of Fig. \ref{XEvent}(a), demonstrate prescribed-time approximate convergence to the optimal solutions at $t_{\rm pre 3} = 3$ where the solid lines represent the actual outputs and the dashed lines represent the optimal solutions. From the output of the third generator, $P_3$, the differentiated projection operator is functional and effectively meets local constraints. Fig. \ref{W} shows that the weighting coefficients $\omega_i^k$ converge to optimal values within $t_{\rm pre 1}=2$ through an online learning process. This convergence effectively guides the preferences of agents while highlighting the decentralized nature of the system, eliminating the need for a central decision-maker. The blue curve in Fig. \ref{TBGE}(b) illustrates that power supply and demand are balanced, confirming that the global resource constraints are satisfied. The imbalance at $t_{\rm pre 3}$, quantified as $\sum_{i=1}^{N}(1+\varrho_i)P_{d,i} - P_i(t_{\rm pre 3})$, is a negligible $0.005$ MW, agaist the total demand of $751.008$ MW. According to (\ref{TBG1}), $T_3(t,t_{\rm pre 3}) = 1, \forall t >t_{\rm pre 3}$, so the convergence error diminishes as $t \rightarrow \infty$.

 Communication dynamics in the first five seconds, detailed in TABLE \ref{table:communications} and the lower part of Fig. \ref{XEvent}(a), reveal that inter-agent communication is discretized, with negligible exchanges after $t_{\rm pre 3}$.
 Fig. \ref{nueta} illustrates the evolution of the exchanged state $\bar{\nu}_i$ and the internal dynamic variables $\eta_i$, further validating the effectiveness of the proposed dynamic ETMs.

\begin{figure*}[ht!]
  \centering
  \subfloat[]{\includegraphics[width=0.245\textwidth]{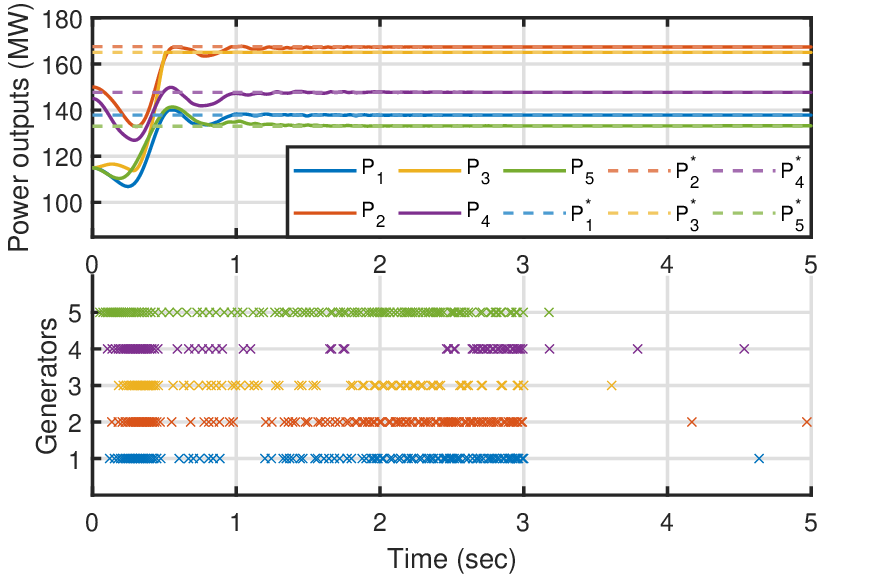}}
  \subfloat[]{\includegraphics[width=0.245\textwidth]{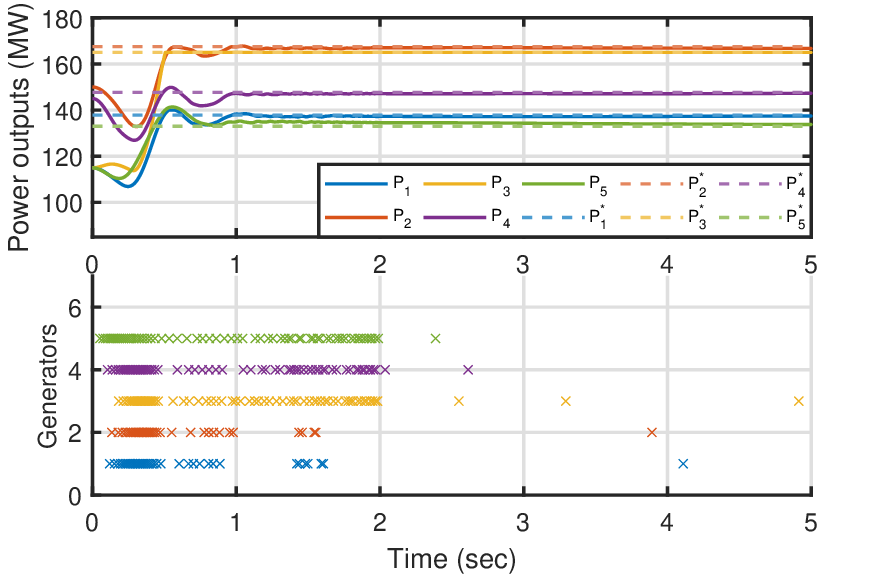}}
  \subfloat[]{\includegraphics[width=0.245\textwidth]{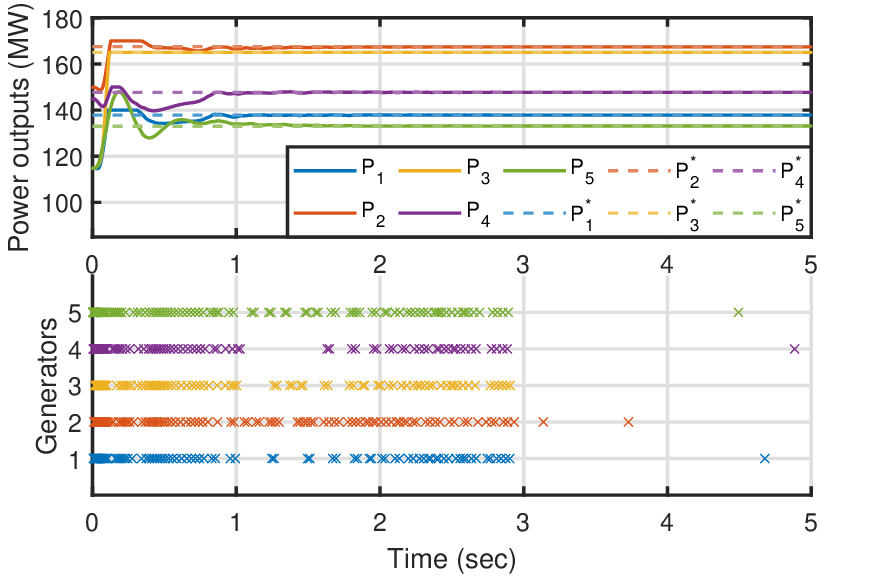}}
  \subfloat[]{\includegraphics[width=0.245\textwidth]{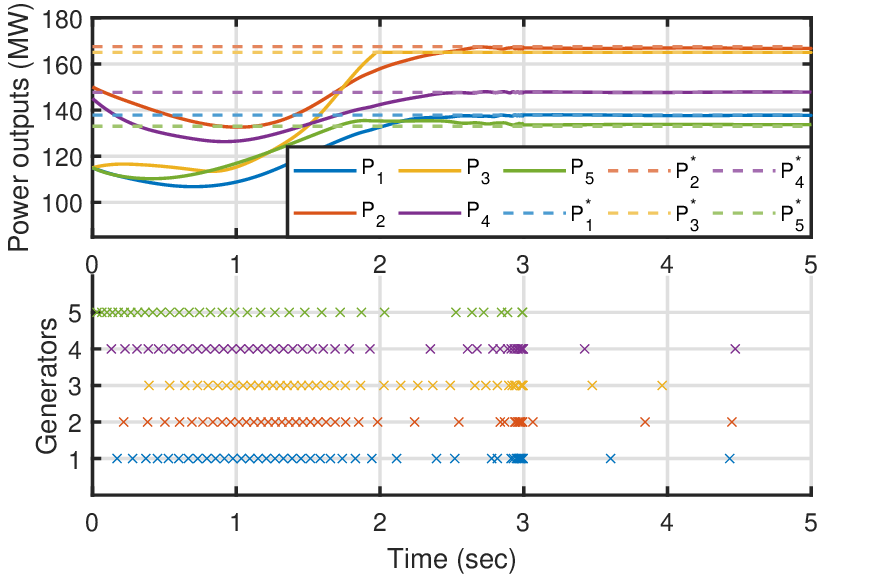}}
  \caption{Power outputs and communications of $u_i$ using the proposed algorithm (\ref{solutionforsubproblem})-(\ref{dynamicETM2}) with different TBGs. (a) TBG 1 (\ref{TBG1}) with $t_{\rm pre 1} =t_{\rm pre 2}= 2$, $t_{\rm pre 3} = 3$; (b) TBG 1 (\ref{TBG1}) with $t_{\rm pre 1} =t_{\rm pre 2}= 1$, $t_{\rm pre 3} = 2$; (c) TBG 2 (\ref{TBG2}) with $t_{\rm pre 1} =t_{\rm pre 2}= 2$, $t_{\rm pre 3} = 3$; (d) TBG 3 (\ref{TBG3}) with $t_{\rm pre 1} =t_{\rm pre 2}= 2$, $t_{\rm pre 3} = 3$.}
  \label{XEvent}
\end{figure*}

\begin{figure}
    \centering
    \includegraphics[width=0.7\linewidth]{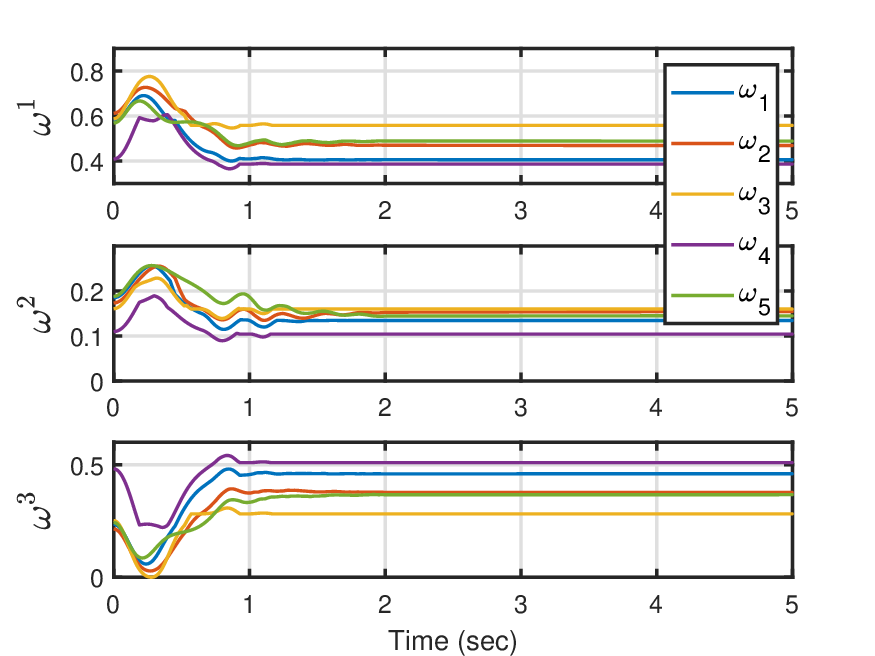}
    \caption{Evolution of weighting coefficients in Case 1.}
    \label{W}
\end{figure}

\begin{figure}[ht!]
  \centering
  \subfloat[]{\includegraphics[width=0.48\linewidth]{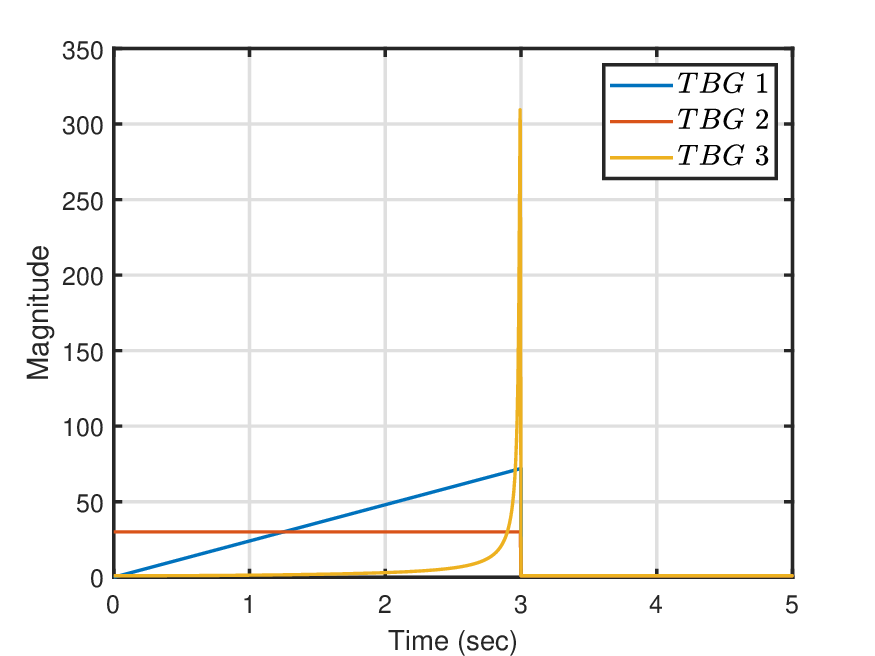}}
  \subfloat[]{\includegraphics[width=0.48\linewidth]{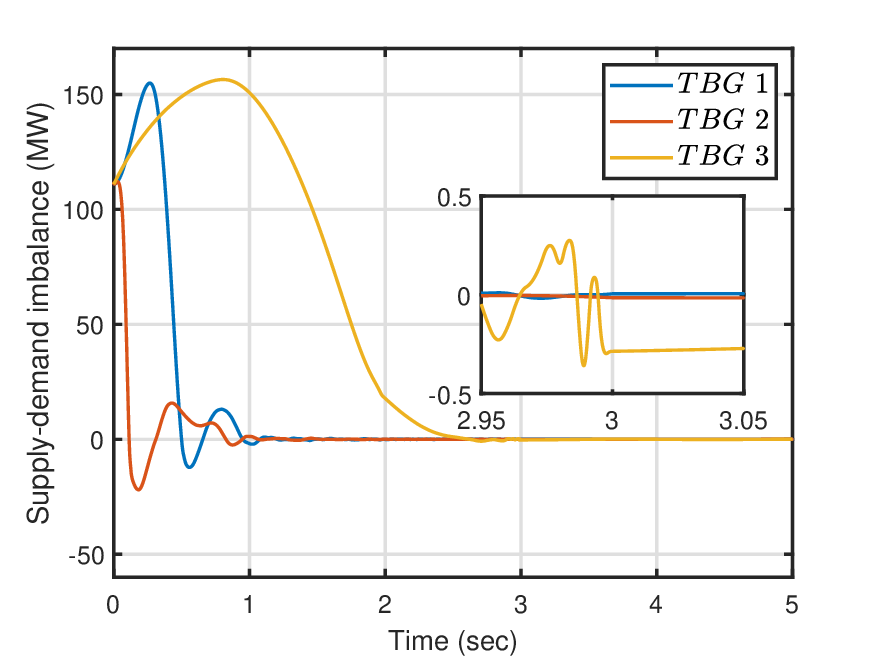}}
  \caption{Simulation results using different TBGs (\ref{TBG1})-(\ref{TBG3}). (a) Outputs of TBGs; (b) Imbalance between demand and supply $\sum_{i=1}^{N}(1+\varrho_i)P_{d,i} - P_i$. }
    \label{TBGE}
\end{figure}

\begin{figure}[ht!]
  \centering
  \subfloat[]{\includegraphics[width=0.48\linewidth]{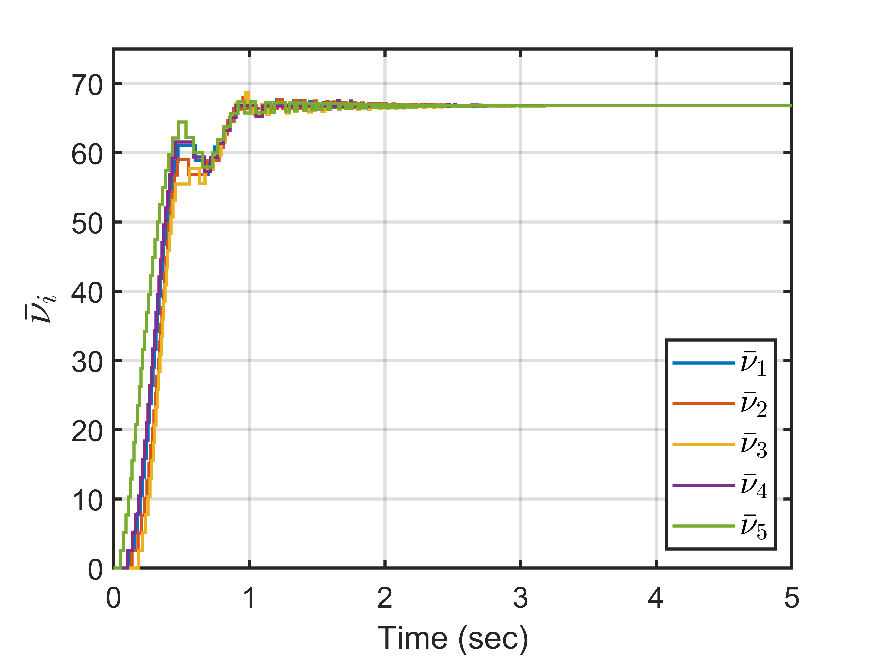}}
  \subfloat[]{\includegraphics[width=0.48\linewidth]{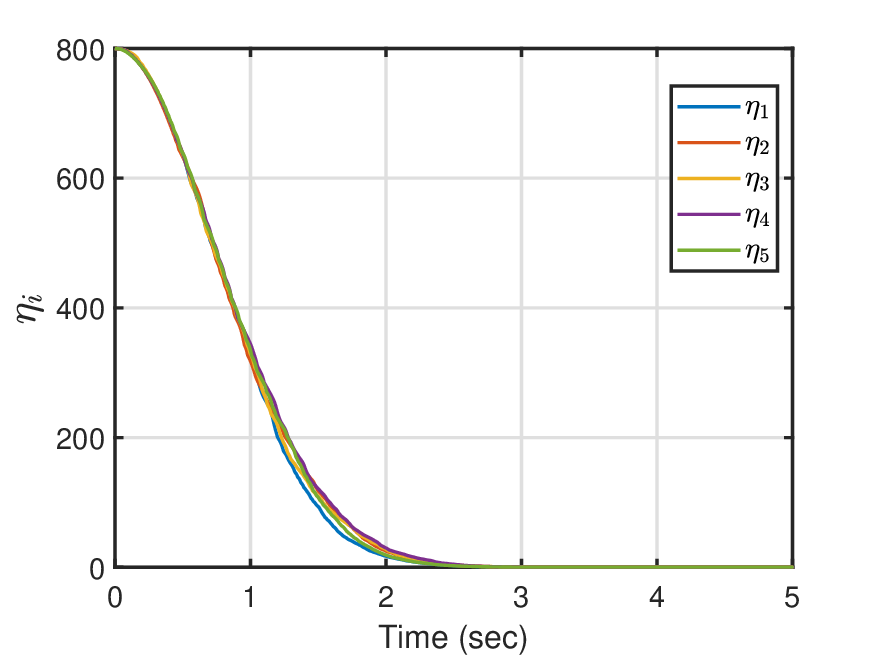}}
  \caption{Dynamic evolution of (a) Exchanged information $\bar{\nu}_i$; (b) Internal dynamic variables $\eta_i$.} 
    \label{nueta}
\end{figure}

 \begin{table}[ht!]
    \centering
    \caption{Communication Results for Case 1.}
    \label{table:communications}
    \begin{tabularx}{\linewidth}{>{\centering\arraybackslash}X 
                                  >{\centering\arraybackslash}X 
                                  >{\centering\arraybackslash}X 
                                  >{\centering\arraybackslash}X 
                                  >{\centering\arraybackslash}X 
                                  >{\centering\arraybackslash}X}
      \toprule
      \multirow{2}{*}{Objective} & \multicolumn{5}{c}{ Communications$^{\mathrm{a}}$}  \\
                                & G$1$  & G$2$  & G$3$  & G$4$ & G$5$  \\ 
	  \toprule
      $f_i^{\rm eco}$  & 92 & 93 & 97 & 95 & 82  \\
      $f_i^{\rm env}$  & 117 & 114 & 131 & 139 & 119   \\
      $f_i^{\rm tec}$  & 63 & 61 & 87 & 83 & 58   \\
      $u_i$            & 107 & 113 & 76 & 67 & 115    \\
      \hline
      Total            & 379 & 381 & 391 & 384 & 374 \\
      \bottomrule
      \multicolumn{6}{p{210pt}}{$^{\mathrm{a}}$Communications: The number of triggered events in 5 seconds. }\\
    \end{tabularx}
  \end{table}

\begin{table*}[ht!]
    \centering
    \caption{Simulation Results in Different Cases.}
    \label{table:cases}
    \begin{tabular}{ccccccccccccc}
      \toprule
      \multirow{2}{*}{Cases}  &\multirow{2}{*}{TBGs} & \multicolumn{3}{c}{ Settling times (s)} & \multirow{2}{*}{ETMs}  & \multicolumn{6}{c}{ Communications$^{\mathrm{a}}$}  &\multirow{2}{*}{CE$^{\mathrm{b}}$ (MW)} \\
                             &  &$t_{\rm pre 1}$ &$t_{\rm pre 2}$ &$t_{\rm pre 3}$ & & G$1$  & G$2$  & G$3$  & G$4$ & G$5$ & Total &\\ 
	  \toprule
      Case 1  & TBG 1 & 2 & 2 & 3 & (\ref{thresholdETM1})-(\ref{dynamicETM}) and (\ref{thresholdETM2})-(\ref{dynamicETM2})  & 379 & 381 & 391 & 384 & 374 & 1909 & 0.005 \\
      Case 2  & TBG 1 & 1 & 1 & 2 & (\ref{thresholdETM1})-(\ref{dynamicETM}) and (\ref{thresholdETM2})-(\ref{dynamicETM2})  & 230 & 237 & 266 & 266 & 261 & 1260 & 0.394 \\
      Case 3  & TBG 2 & 2 & 2 & 3 & (\ref{thresholdETM1})-(\ref{dynamicETM}) and (\ref{thresholdETM2})-(\ref{dynamicETM2})  & 387 & 406 & 440 & 444 & 403 & 2080 & -0.015 \\
      Case 4  & TBG 3 & 2 & 2 & 3 & (\ref{thresholdETM1})-(\ref{dynamicETM}) and (\ref{thresholdETM2})-(\ref{dynamicETM2})  & 248 & 253 & 251 & 269 & 236 & 1257 & -0.286 \\
      Case 5  & TBG 3 & 2 & 2 & 3 & (\ref{thresholdETM1})-(\ref{dynamicETM}) and (\ref{thresholdETM2})-(\ref{dynamicETM2})  & 268 & 271 & 269 & 287 & 263 & 1358 & 0.149 \\
      Case 6  & TBG 1 & 2 & 2 & 3 & (\ref{thresholdETMstatic}) \cite{gao2022ETMGT, zhao2019distributed, wu2020event, chai2024event} & 8850 & 9742 & 9147 & 8598 & 9751 & 46088 & -0.000256 \\
      Case 7  & TBG 1 & 2 & 2 & 3 & (\ref{thresholdETMdynamicguo})-(\ref{dynamicETMguo}) \cite{GUO2022distributeddynamic} & 378 & 412 & 431 & 427 & 397 & 2045 & 0.00904 \\
      \bottomrule
      \multicolumn{9}{p{250pt}}{$^{\mathrm{a}}$Communications: Total number of events triggered in the first 5 seconds. }\\
      \multicolumn{9}{p{300pt}}{$^{\mathrm{b}}$CE: Convergence error at the prescribed settling time: $\sum_{i=1}^{N}(1+\varrho_i)P_{d,i} - P_i(t_{\rm pre 3})$. }\\
    \end{tabular}
  \end{table*}

\subsection{Prescribed-Time Control Validation}
To further validate the prescribed-time approximate convergence achieved by the proposed algorithm, the settling times are set as $t_{\rm pre 1} = t_{\rm pre 2} = 1$ and $t_{\rm pre 3} = 2$, with all other conditions and parameters identical to Section \ref{sectionA}. The results are presented in Fig. \ref{XEvent}(b) and Case 2 in Table \ref{table:cases}, which clearly illustrate that the power outputs converge to the optimal solutions by $t_{\rm pre 3} = 2$ under event-triggered communication. Combined with the results from Case 1, this demonstrates that the proposed algorithm ensures prescribed-time approximate convergence with arbitrarily defined settling times, independent of initial conditions or other parameters. This highlights the robustness and applicability of the algorithm.

\subsection{Different TBGs Comparison}
To further evaluate the flexibility of our proposed algorithm, TBG 2 \cite{liu2023multiobjective} and TBG 3 \cite{NING2019Practicalfixed, GUO2022distributeddynamic, zhang_predefined-time_2023} are implemented for comparison studies. All other settings are are identical to Case~1. The experimental scenarios are detailed in Cases 3 and 4 of TABLE \ref{table:cases}, with corresponding curves depicted in Fig. \ref{XEvent}(c), (d), and Fig. \ref{TBGE}.

The power output data in Fig. \ref{XEvent} shows that, despite variations in output curves and convergence errors, all tested configurations achieve prescribed-time approximate convergence at $t_{\rm pre 3}$. Notably, the dynamic responses vary with the different TBGs: TBG 2 (Case 3) exhibits the highest initial rise and overshoot, TBG 1 (Case 1) is moderate, and TBG 3 (Case 4) shows the mildest initial response. The output profiles of TBGs corroborate this in Fig. \ref{TBGE}(a), where TBG 2 shows the highest initial output effort, whereas the output of TBG 3 increases most gradually at the beginning. 

As shown in the lower sections of Fig. \ref{XEvent} and TABLE \ref{table:cases}, Case 3 has the highest communication frequency, while Case 4 has the lowest. From Fig. \ref{TBGE}(b) and TABLE \ref{table:cases}, TBG 1 achieves the smallest convergence error, while TBG 3 has the largest error, despite its advantages in controller output smoothness and reduced communication demands. Even when the tolerance parameter is tightened from $10^{-7}$ to $10^{-9}$ (Case 5), the error remains larger than the other two TBGs, with increased communication frequency. This indicates that reducing the error tolerance improves convergence accuracy but compromises communication efficiency and requires higher system precision.

In conclusion, the simulation results validate the practicality and flexibility of our algorithm, highlighting its ability to tailor TBG configurations to meet specific performance criteria, as supported by Remarks \ref{remark4} and \ref{remark5}.

\subsection{ETMs Comparison Study}
To further evaluate the communication efficiency of the proposed algorithm, a comparison study is conducted, documented as Cases 6 and 7. In Case 6, the static ETM (\ref{thresholdETMstatic}) from \cite{gao2022ETMGT, zhao2019distributed, wu2020event, chai2024event} replaces (\ref{thresholdETM1})–(\ref{dynamicETM}) and (\ref{thresholdETM2})–(\ref{dynamicETM2}), while in Case 7, the dynamic ETM (\ref{thresholdETMdynamicguo})–(\ref{dynamicETMguo}) from \cite{GUO2022distributeddynamic} is used. For fairness, all parameters are identical to Case~1, except for the ETM-specific choices stated in Remark~\ref{remark7} (i.e., $\alpha_i^k, \alpha_i \to \infty$ in Case~6 and $\beta_i^k, \beta_i=0$ in Case~7). Results are presented in TABLE \ref{table:cases} and Fig. \ref{ETMs}, with a bar chart in Fig. \ref{ETMs}(b) providing a statistical comparison of communication events across different ETMs, showing that Case 1 has less communication than other two Cases. Specifically, from \ref{ETMs}(b), the proposed dynamic ETM reduces communication events by $96.18\%$ compared to the static ETM (\ref{thresholdETMstatic}) and by $12.97\%$ compared to the dynamic ETM in \cite{GUO2022distributeddynamic}. These results support Remark \ref{remark7} and validate the superior efficiency of the proposed ETMs in minimizing communication overhead while maintaining robust system performance.

\begin{figure}[ht!]
  \centering
  \subfloat[]{\includegraphics[width=0.48\linewidth]{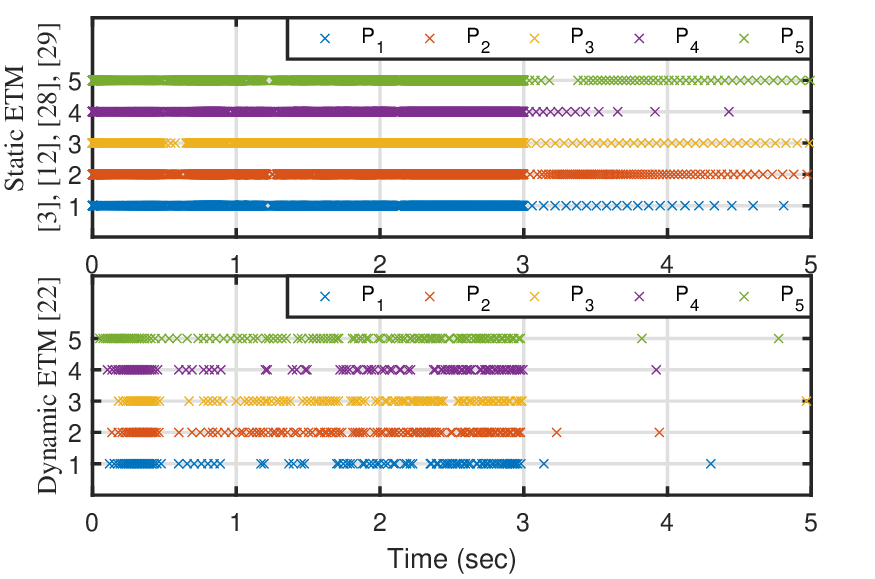}}
  \subfloat[]{\includegraphics[width=0.48\linewidth]{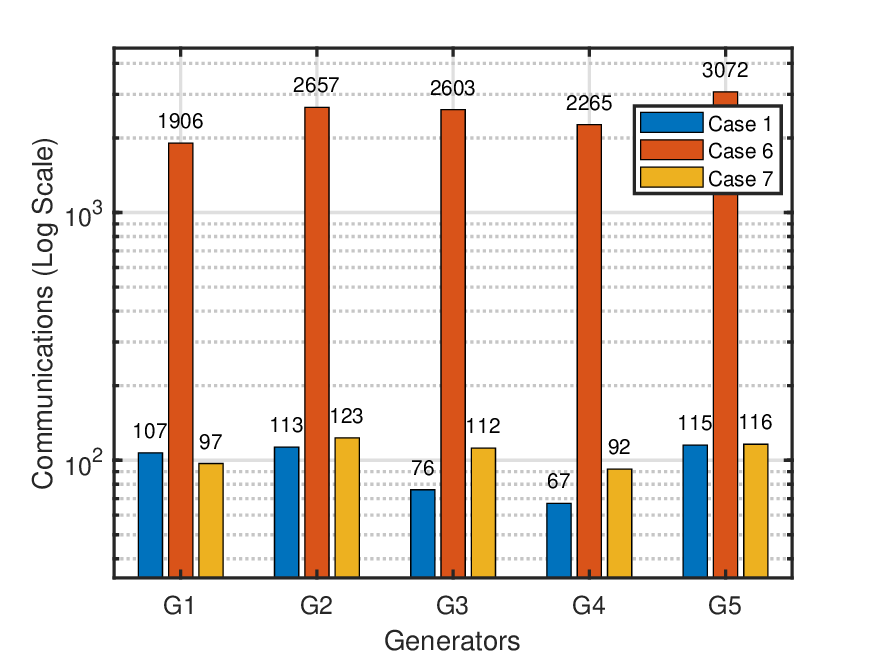}}
  \caption{Simulation results: (a) Event times of $ u_i $ using the static ETM (\ref{thresholdETMstatic}) and the dynamic ETM (\ref{thresholdETMdynamicguo})-(\ref{dynamicETMguo}); (b) Number of communications for different ETMs in the first $5$ seconds.}
    \label{ETMs}
\end{figure}

\section{Conclusion}\label{section5}
This paper addresses CMRAPs across multi-agent networks under both local and global constraints. By reformulating CMRAPs into a weighted $L_p$-based framework, our study finds the optimal compromised solution and avoids the reliance on predetermined weighting coefficients or centralized decision-making authority. Central to our approach is the novel distributed algorithm characterized by prescribed-time control and dynamic ETMs. The generalized TBGs provide the flexibility to tailor the control performance to the demands of various operational scenarios. Furthermore, our dynamic ETMs significantly reduce communication overhead by incorporating a network-based error term and interacting with TBGs. The feasibility and efficiency of the algorithm are validated under mild assumptions through rigorous Lyapunov stability analysis and detailed simulations on a microgrid system. Comparative studies highlight its superior performance over existing methods. The algorithm effectively addresses complex multi-objective optimization tasks on low-cost, resource-constrained agents with minimal communication and computational resources, advancing distributed optimization for more practical and complex real-world applications.

\bibliographystyle{IEEEtran}
\bibliography{ref}

\begin{thebibliography}{10}
\providecommand{\url}[1]{#1}
\csname url@samestyle\endcsname
\providecommand{\newblock}{\relax}
\providecommand{\bibinfo}[2]{#2}
\providecommand{\BIBentrySTDinterwordspacing}{\spaceskip=0pt\relax}
\providecommand{\BIBentryALTinterwordstretchfactor}{4}
\providecommand{\BIBentryALTinterwordspacing}{\spaceskip=\fontdimen2\font plus
\BIBentryALTinterwordstretchfactor\fontdimen3\font minus \fontdimen4\font\relax}
\providecommand{\BIBforeignlanguage}[2]{{%
\expandafter\ifx\csname l@#1\endcsname\relax
\typeout{** WARNING: IEEEtran.bst: No hyphenation pattern has been}%
\typeout{** loaded for the language `#1'. Using the pattern for}%
\typeout{** the default language instead.}%
\else
\language=\csname l@#1\endcsname
\fi
#2}}
\providecommand{\BIBdecl}{\relax}
\BIBdecl

\bibitem{nedic2020distributedmachine}
A.~Nedic, ``Distributed gradient methods for convex machine learning problems in networks: Distributed optimization,'' \emph{IEEE Signal Processing Magazine}, vol.~37, no.~3, pp. 92--101, 2020.

\bibitem{li2020distributedcoordination}
Z.~Li, Z.~Wu, Z.~Li, and Z.~Ding, ``Distributed optimal coordination for heterogeneous linear multiagent systems with event-triggered mechanisms,'' \emph{IEEE Transactions on Automatic Control}, vol.~65, no.~4, pp. 1763--1770, 2020.

\bibitem{zhao2019distributed}
T.~Zhao, Z.~Li, and Z.~Ding, ``Consensus-based distributed optimal energy management with less communication in a microgrid,'' \emph{IEEE Transactions on Industrial Informatics}, vol.~15, no.~6, pp. 3356--3367, 2019.

\bibitem{Xu2024Power}
Y.~Xu, A.~Parisio, Z.~Li, Z.~Dong, and Z.~Ding, ``Optimization-based ramping reserve allocation of bess for agc enhancement,'' \emph{IEEE Transactions on Power Systems}, vol.~39, no.~2, pp. 2491--2505, 2024.

\bibitem{yi2016initialization}
P.~Yi, Y.~Hong, and F.~Liu, ``Initialization-free distributed algorithms for optimal resource allocation with feasibility constraints and application to economic dispatch of power systems,'' \emph{Automatica}, vol.~74, pp. 259--269, 2016.

\bibitem{chen2022initializationfree}
G.~Chen and Z.~Guo, ``Initialization-free distributed fixed-time convergent algorithms for optimal resource allocation,'' \emph{IEEE Transactions on Systems, Man, and Cybernetics: Systems}, vol.~52, no.~2, pp. 845--854, 2022.

\bibitem{yang2022ADMMnonconvexNonsmooth}
Y.~Yang, Q.-S. Jia, Z.~Xu, X.~Guan, and C.~J. Spanos, ``Proximal admm for nonconvex and nonsmooth optimization,'' \emph{Automatica}, vol. 146, p. 110551, 2022.

\bibitem{zhang2020DualRA}
J.~Zhang, K.~You, and K.~Cai, ``Distributed dual gradient tracking for resource allocation in unbalanced networks,'' \emph{IEEE Transactions on Signal Processing}, vol.~68, pp. 2186--2198, 2020.

\bibitem{nedic2009distributedsubgradient}
A.~Nedic and A.~Ozdaglar, ``Distributed subgradient methods for multi-agent optimization,'' \emph{IEEE Transactions on Automatic Control}, vol.~54, no.~1, pp. 48--61, 2009.

\bibitem{KIA2015Distributed}
S.~S. Kia, J.~Cortés, and S.~Martínez, ``Distributed convex optimization via continuous-time coordination algorithms with discrete-time communication,'' \emph{Automatica}, vol.~55, pp. 254--264, 2015.

\bibitem{zhu2019continuoustimegenral}
Y.~Zhu, W.~Yu, G.~Wen, G.~Chen, and W.~Ren, ``Continuous-time distributed subgradient algorithm for convex optimization with general constraints,'' \emph{IEEE Transactions on Automatic Control}, vol.~64, no.~4, pp. 1694--1701, 2019.

\bibitem{gao2022ETMGT}
L.~Gao, S.~Deng, H.~Li, and C.~Li, ``An event-triggered approach for gradient tracking in consensus-based distributed optimization,'' \emph{IEEE Transactions on Network Science and Engineering}, vol.~9, no.~2, pp. 510--523, 2022.

\bibitem{chen2017distributedfinitetime}
G.~Chen, J.~Ren, and E.~N. Feng, ``Distributed finite-time economic dispatch of a network of energy resources,'' \emph{IEEE Transactions on Smart Grid}, vol.~8, no.~2, pp. 822--832, 2017.

\bibitem{shi2019finite}
X.~Shi, J.~Cao, G.~Wen, and M.~Perc, ``Finite-time consensus of opinion dynamics and its applications to distributed optimization over digraph,'' \emph{IEEE Transactions on Cybernetics}, vol.~49, no.~10, pp. 3767--3779, 2019.

\bibitem{liu2022finite}
H.~Liu, W.~X. Zheng, and W.~Yu, ``Continuous-time algorithm based on finite-time consensus for distributed constrained convex optimization,'' \emph{IEEE Transactions on Automatic Control}, vol.~67, no.~5, pp. 2552--2559, 2022.

\bibitem{ning2019fixeddistributed}
B.~Ning, Q.-L. Han, and Z.~Zuo, ``Distributed optimization for multiagent systems: An edge-based fixed-time consensus approach,'' \emph{IEEE Transactions on Cybernetics}, vol.~49, no.~1, pp. 122--132, 2019.

\bibitem{li2021multiobjective}
Z.~Li and Z.~Ding, ``Distributed multiobjective optimization for network resource allocation of multiagent systems,'' \emph{IEEE Transactions on Cybernetics}, vol.~51, no.~12, pp. 5800--5810, 2021.

\bibitem{Lin2020predefined}
W.-T. Lin, Y.-W. Wang, C.~Li, and X.~Yu, ``Predefined-time optimization for distributed resource allocation,'' \emph{Journal of the Franklin Institute}, vol. 357, no.~16, pp. 11\,323--11\,348, 2020.

\bibitem{Wang2024predefined}
X.~Wang, C.~Su, H.~Dai, and L.~Yan, ``Predefined-time distributed optimization algorithms for a class of resource allocation problem,'' \emph{Journal of the Franklin Institute}, vol. 361, no.~12, p. 107009, 2024.

\bibitem{Su2025predefined}
C.~Su, Z.~Chen, Z.~Zhu, H.~Dai, and J.~Chang, ``Solving a class of resource allocation problem under dynamic constraints: A predefined-time distributed optimization scheme,'' \emph{ISA Transactions}, 2025.

\bibitem{NING2019Practicalfixed}
B.~Ning, Q.-L. Han, and Z.~Zuo, ``Practical fixed-time consensus for integrator-type multi-agent systems: A time base generator approach,'' \emph{Automatica}, vol. 105, pp. 406--414, 2019.

\bibitem{GUO2022distributeddynamic}
Z.~Guo and G.~Chen, ``Distributed dynamic event-triggered and practical predefined-time resource allocation in cyber–physical systems,'' \emph{Automatica}, vol. 142, p. 110390, 2022.

\bibitem{shi2022predefinedtime}
X.~Shi, L.~Xu, and T.~Yang, ``Predefined-time distributed event-triggered algorithms for resource allocation,'' \emph{IET Cyber-Physical Systems: Theory \& Applications}, vol.~7, no.~4, pp. 183--196, 2022.

\bibitem{yang2025preETM}
F.~Yang, J.~Liu, Y.~Du, and C.-B. Yan, ``Predefined-time secure distributed energy management for microgrids against dos attack based on dynamic event-triggered approach,'' \emph{IEEE Transactions on Automation Science and Engineering}, vol.~22, pp. 12\,791--12\,801, 2025.

\bibitem{zhou2025singularity}
S.~Zhou, P.~Yi, S.~Yang, and Y.~Wei, ``Singularity-free prescribed-time distributed resource allocation based on time space deformation,'' \emph{Applied Mathematics and Statistics}, vol.~2, no.~1, p.~1, Jun 2025.

\bibitem{liu2023multiobjective}
Y.~Liu, Z.~Xia, and W.~Gui, ``Multiobjective distributed optimization via a predefined-time multiagent approach,'' \emph{IEEE Transactions on Automatic Control}, vol.~68, no.~11, pp. 6998--7005, 2023.

\bibitem{zhou2022sampledData}
J.~Zhou, Y.~Lv, C.~Wen, and G.~Wen, ``Solving specified-time distributed optimization problem via sampled-data-based algorithm,'' \emph{IEEE Transactions on Network Science and Engineering}, vol.~9, no.~4, pp. 2747--2758, 2022.

\bibitem{wu2020event}
Z.~Wu, Z.~Li, Z.~Ding, and Z.~Li, ``Distributed continuous-time optimization with scalable adaptive event-based mechanisms,'' \emph{IEEE Transactions on Systems, Man, and Cybernetics: Systems}, vol.~50, no.~9, pp. 3252--3257, 2020.

\bibitem{chai2024event}
Y.~Chai, Y.~Hu, S.~Qin, J.~Feng, and C.~Xu, ``Distributed adaptive event-triggered algorithms for nonsmooth resource allocation optimization over switching topologies,'' \emph{IEEE Transactions on Systems, Man, and Cybernetics: Systems}, vol.~54, no.~6, pp. 3484--3496, 2024.

\bibitem{duDOdynamic}
W.~Du, X.~Yi, J.~George, K.~H. Johansson, and T.~Yang, ``Distributed optimization with dynamic event-triggered mechanisms,'' in \emph{2018 IEEE Conference on Decision and Control (CDC)}, 2018, pp. 969--974.

\bibitem{XU2022DETM}
W.~Xu, W.~He, D.~W. Ho, and J.~Kurths, ``Fully distributed observer-based consensus protocol: Adaptive dynamic event-triggered schemes,'' \emph{Automatica}, vol. 139, p. 110188, 2022.

\bibitem{Liu2023DONonlinearETM}
D.~Liu, M.~Shen, Y.~Jing, and Q.-G. Wang, ``Distributed optimization of nonlinear multiagent systems via event-triggered communication,'' \emph{IEEE Transactions on Circuits and Systems II: Express Briefs}, vol.~70, no.~6, pp. 2092--2096, 2023.

\bibitem{song2022FixedTimeETM}
Y.~Song, J.~Cao, and L.~Rutkowski, ``A fixed-time distributed optimization algorithm based on event-triggered strategy,'' \emph{IEEE Transactions on Network Science and Engineering}, vol.~9, no.~3, pp. 1154--1162, 2022.

\bibitem{yang2024FixedTimeDynamicETM}
F.~Yang, J.~Liu, and X.~Guan, ``Distributed fixed-time optimal energy management for microgrids based on a dynamic event-triggered mechanism,'' \emph{IEEE/CAA Journal of Automatica Sinica}, vol.~11, no.~12, pp. 2396--2407, 2024.

\bibitem{JI2024etm}
L.~Ji, L.~Zhang, C.~Zhang, S.~Yang, X.~Guo, and H.~Li, ``Distributed economic dispatch algorithm in smart grid based on event-triggered and fixed-time consensus methods,'' \emph{Neurocomputing}, vol. 572, p. 127178, 2024.

\bibitem{yang2025preETMEL}
F.~Yang, J.~Liu, Q.~Ma, and X.~Xie, ``Dynamic event-triggered predefined-time distributed optimal consensus for multiple euler–lagrangian systems with unavailable parameters,'' \emph{IEEE Systems Journal}, pp. 1--11, 2025.

\bibitem{Liu2021LiNing}
L.-N. Liu and G.-H. Yang, ``Distributed optimal economic environmental dispatch for microgrids over time-varying directed communication graph,'' \emph{IEEE Transactions on Network Science and Engineering}, vol.~8, no.~2, pp. 1913--1924, 2021.

\bibitem{BARAKAT2020MOPReliability}
S.~Barakat, H.~Ibrahim, and A.~A. Elbaset, ``Multi-objective optimization of grid-connected pv-wind hybrid system considering reliability, cost, and environmental aspects,'' \emph{Sustainable Cities and Society}, vol.~60, p. 102178, 2020.

\bibitem{FONSECA2021Sustainability}
J.~D. Fonseca, J.-M. Commenge, M.~Camargo, L.~Falk, and I.~D. Gil, ``Sustainability analysis for the design of distributed energy systems: A multi-objective optimization approach,'' \emph{Applied Energy}, vol. 290, p. 116746, 2021.

\bibitem{huang2021AntColony}
T.~Huang, W.~Lin, C.~Xiong, R.~Pan, and J.~Huang, ``An ant colony optimization-based multiobjective service replicas placement strategy for fog computing,'' \emph{IEEE Transactions on Cybernetics}, vol.~51, no.~11, pp. 5595--5608, 2021.

\bibitem{DENG2022genetic}
W.~Deng, X.~Zhang, Y.~Zhou, Y.~Liu, X.~Zhou, H.~Chen, and H.~Zhao, ``An enhanced fast non-dominated solution sorting genetic algorithm for multi-objective problems,'' \emph{Information Sciences}, vol. 585, pp. 441--453, 2022.

\bibitem{Tian2020Swarm}
Y.~Tian, X.~Zheng, X.~Zhang, and Y.~Jin, ``Efficient large-scale multiobjective optimization based on a competitive swarm optimizer,'' \emph{IEEE Transactions on Cybernetics}, vol.~50, no.~8, pp. 3696--3708, 2020.

\bibitem{YU2022fixedtimeconsensusDO}
Z.~Yu, J.~Sun, S.~Yu, and H.~Jiang, ``Fixed-time consensus for multi-agent systems with objective optimization on directed detail-balanced networks,'' \emph{Information Sciences}, vol. 607, pp. 1583--1599, 2022.

\bibitem{zhang_predefined-time_2023}
K.~Zhang, L.~Xu, X.~Yi, Z.~Ding, K.~H. Johansson, T.~Chai, and T.~Yang, ``\BIBforeignlanguage{en}{Predefined-time distributed multiobjective optimization for network resource allocation},'' \emph{\BIBforeignlanguage{en}{Science China Information Sciences}}, vol.~66, no.~7, p. 170204, Jun. 2023.

\bibitem{yi2018doubleorder}
X.~Yi, L.~Yao, T.~Yang, J.~George, and K.~H. Johansson, ``Distributed optimization for second-order multi-agent systems with dynamic event-triggered communication,'' in \emph{2018 IEEE Conference on Decision and Control (CDC)}, 2018, pp. 3397--3402.

\bibitem{ruszczynski2011nonlinear}
A.~Ruszczynski, \emph{Nonlinear Optimization}.\hskip 1em plus 0.5em minus 0.4em\relax Princeton University Press, 2011.

\bibitem{BROGLIATO2006ontheequivalence}
B.~Brogliato, A.~Daniilidis, C.~Lemaréchal, and V.~Acary, ``On the equivalence between complementarity systems, projected systems and differential inclusions,'' \emph{Systems \& Control Letters}, vol.~55, no.~1, pp. 45--51, 2006.

\bibitem{Aubin1984differentialinclusions}
J.~P. Aubin and A.~Cellina, \emph{Differential Inclusions: Set-Valued Maps and Viability Theory}.\hskip 1em plus 0.5em minus 0.4em\relax Berlin, Heidelberg: Springer-Verlag, 1984.

\bibitem{Liu2021IEEE14}
C.~Liu, H.~Liang, and T.~Chen, ``Network parameter coordinated false data injection attacks against power system ac state estimation,'' \emph{IEEE Transactions on Smart Grid}, vol.~12, no.~2, pp. 1626--1639, 2021.

\bibitem{LI2022Expert}
L.-L. Li, J.-L. Lou, M.-L. Tseng, M.~K. Lim, and R.~R. Tan, ``A hybrid dynamic economic environmental dispatch model for balancing operating costs and pollutant emissions in renewable energy: A novel improved mayfly algorithm,'' \emph{Expert Systems with Applications}, vol. 203, p. 117411, 2022.

\bibitem{LIU2023Energy}
J.~Liu, C.~Wang, J.~Liu, and P.~Xie, ``Event-triggered distributed control strategy for multi-energy systems based on multi-objective dispatch,'' \emph{Energy}, vol. 263, p. 125980, 2023.

\end{thebibliography}

\begin{IEEEbiography}[{\includegraphics[width=1in,height=1.25in,clip,keepaspectratio]{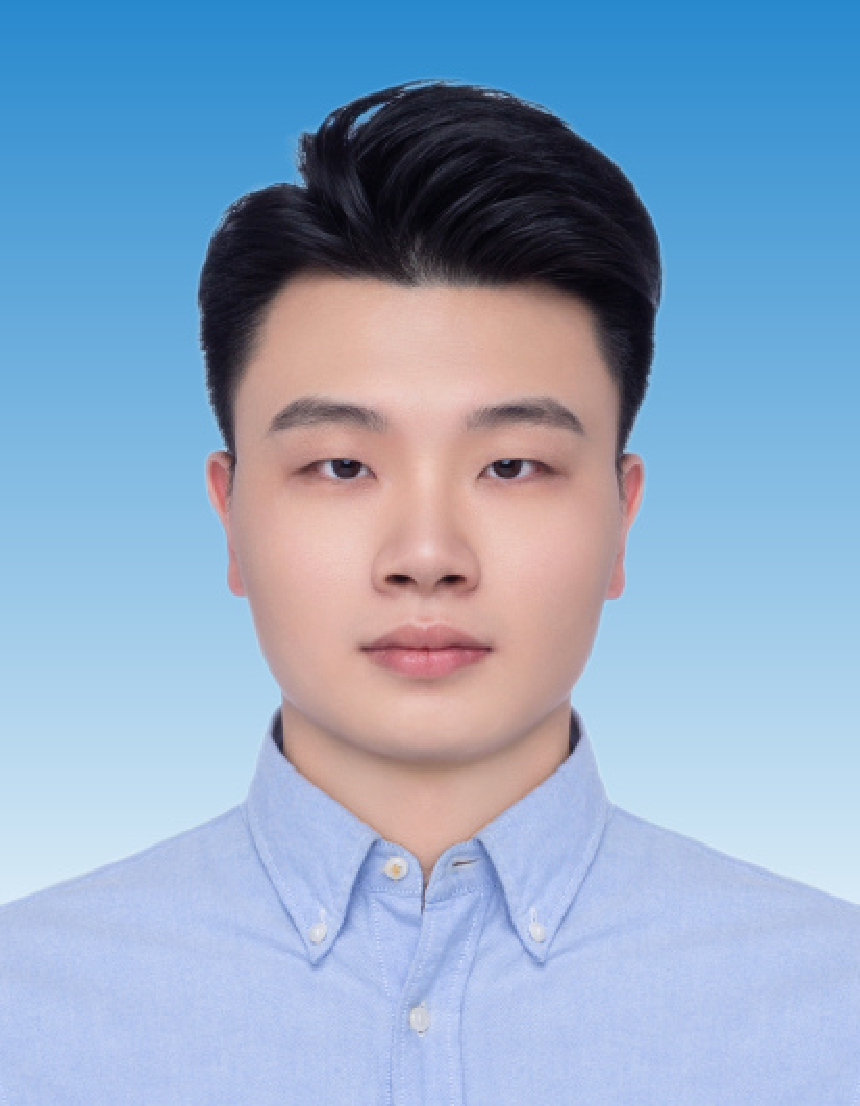}}]%
{Tengyang Gong} received the M.Sc. degree in advanced control and systems engineering from the Department of Electrical and Electronic Engineering, The University of Manchester, U.K., in 2022, where he is currently pursuing the Ph.D. degree in electrical and electronic engineering.

His research interests include the distributed optimization, control, and game theory of multiagent systems.
\end{IEEEbiography}
\begin{IEEEbiography}[{\includegraphics[width=1in,height=1.25in,clip,keepaspectratio]{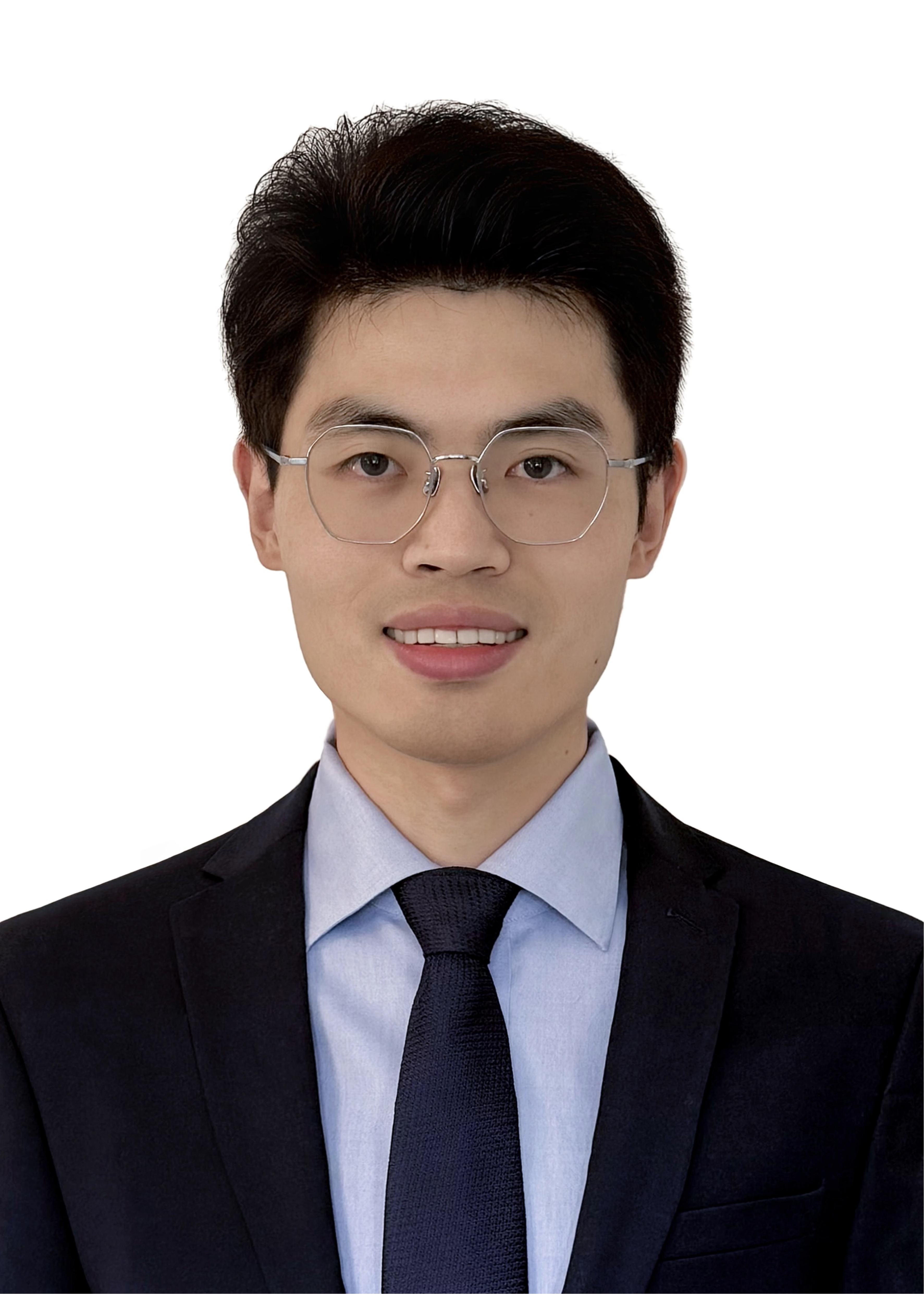}}]%
{Zhongguo Li}  received the B.Eng. and Ph.D. degrees in electrical and electronic engineering from The University of Manchester, Manchester, U.K., in 2017 and 2021, respectively.

He was a Lecturer with University College London, London, U.K., and a Research Associate at Loughborough University, Loughborough, U.K. He is currently a Lecturer (an Assistant Professor) in robotics, control, communication, and AI with The University of Manchester. His research interests include optimization and decision-making for advanced control, distributed algorithm development for game and learning in network-connected multiagent systems, and their applications in robotics and autonomous vehicles.
\end{IEEEbiography}
\begin{IEEEbiography}[{\includegraphics[width=1in,height=1.25in,clip,keepaspectratio]{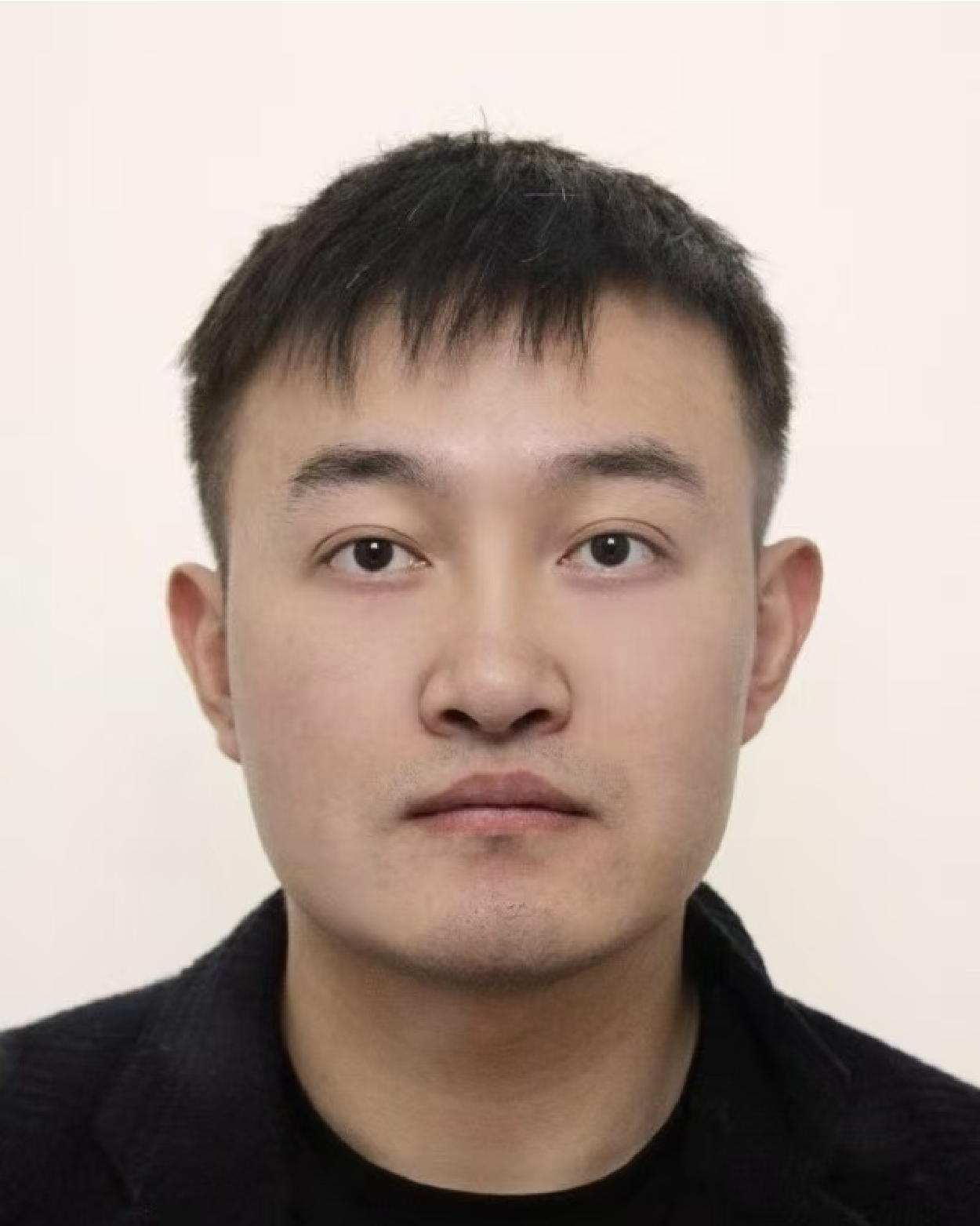}}]%
{Yiqiao Xu} received the M.Sc. degree in advanced control and systems engineering and the Ph.D. degree in electrical and electronic engineering from The University of Manchester, U.K., in 2018 and 2023, respectively.

Since 2023, he has been a Post-Doctoral Research Associate with the Power and Energy Division, The University of Manchester. His research interests include distributed optimization and learning-based control of power networks and multienergy systems.
\end{IEEEbiography}
\begin{IEEEbiography}[{\includegraphics[width=1in,height=1.25in,clip,keepaspectratio]{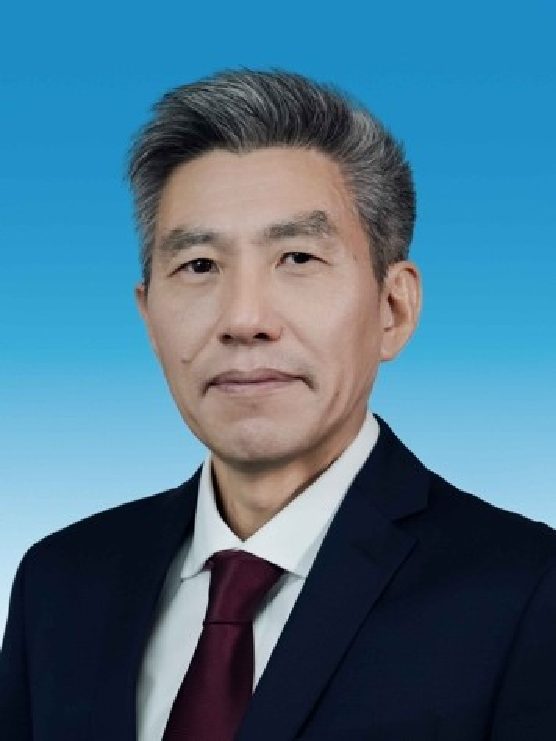}}]%
{Zhengtao Ding}  received the B.Eng. degree in thermal energy from Tsinghua University, Beijing, China, in 1984, and the M.Sc. degree in systems and control and the Ph.D. degree in control systems from the University of Manchester Institute of Science and Technology, Manchester, U.K., in 1986 and 1989, respectively. After working in Singapore for 10 years, he joined The University of Manchester, Manchester, in 2003, where he is currently a Professor of Control Systems. He has authored/co-authored five books, including the book Nonlinear and Adaptive Control Systems (IET, 2013), and has published over 400 research articles. His research interests include nonlinear and adaptive control theory and their applications, more recently, network-based control, distributed optimization, and distributed learning, with applications to power systems and robotics.

Prof. Ding is a member of the IEEE Technical Committee on Nonlinear Systems and Control, the IEEE Technical Committee on Intelligent Control, and the IFAC Technical Committee on Adaptive and Learning Systems. He was elected as a fellow of The Alan Turing Institute in 2021, the U.K.’s national institute for data science and artificial intelligence. He serves/has served as the Editor-in-Chief for Drones and Autonomous Vehicles, the Subject Chief Editor for Nonlinear Control for Frontiers, and an Associate Editor for Scientific Reports, IEEE TRANSACTIONS ON AUTOMATIC CONTROL, IEEE TRANSACTIONS ON CIRCUITS AND SYSTEMS—II: EXPRESS BRIEFS, IEEE CONTROL SYSTEMS LETTERS, Transactions of the Institute of Measurement and Control, Control Theory and Technology, Unmanned Systems, and several other journals.
\end{IEEEbiography}

\end{document}